\spnewtheorem{fact}{Fact}{\bfseries}{\itshape}
\spnewtheorem{notation}{Notation}{\bfseries}{}
\title{On the security of a Loidreau rank metric code based
encryption scheme}
\author{Daniel Coggia\inst{1,2} \and Alain Couvreur\inst{2,3}}
\institute{DGA \and
  INRIA,\and
  LIX, CNRS UMR 7161,
  \'Ecole polytechnique,\break 91128 Palaiseau Cedex, France.\break
  \email{daniel.coggia@inria.fr},
  \email{alain.couvreur@inria.fr}}
\newcommand{\eqdef}{\stackrel{\text{def}}{=}}
\newcommand{\Span}[1]{\mathbf{Span}{\left\{ #1 \right\}}}
\renewcommand{\leq}{\leqslant} 
\renewcommand{\geq}{\geqslant}
\newcommand{\Fq}{\ensuremath{\mathbb{F}_q}}
\newcommand{\Fqm}{\ensuremath{\mathbb{F}_{q^m}}}
\newcommand{\F}{\ensuremath{\mathbb{F}}}
\newcommand{\code}[1]{\ensuremath{\mathscr{#1}}}
\newcommand{\Cpub}{{\code{C}_{\textup{pub}}}}
\newcommand{\Csec}{{\code{C}_{\text{sec}}}}
\newcommand{\AC}{\code{A}}
\newcommand{\CC}{\code{C}}
\newcommand{\frob}[2]{#1^{[#2]}}
\newcommand{\av}{\mat{a}}
\newcommand{\bv}{\mat{b}}
\newcommand{\cv}{\mat{c}}
\newcommand{\ev}{\mat{e}}
\newcommand{\gv}{{\mat{g}}}
\newcommand{\hv}{{\mat{h}}}
\newcommand{\mv}{\mat{m}}
\newcommand{\uv}{\mat{u}}
\newcommand{\vv}{\mat{v}}
\newcommand{\xv}{{\mat{x}}}
\newcommand{\yv}{{\mat{y}}}
\newcommand{\rkwt}[1]{\ensuremath{|#1|_{\rm R}}}
\newcommand{\dist}[2]{d_{\rm R}(#1, #2)}
\newcommand{\dmin}[1]{\textrm{d}_{\rm min}(#1)}
\newcommand{\suppt}[1]{\ensuremath{\textsf{supp}(#1)}}
\newcommand{\V}{\mathcal{V}}
\newcommand{\card}[1]{\ensuremath{\left|#1\right|}}
\newcommand{\prob}{\mathbb{P}}
\newcommand{\esp}[1]{\ensuremath{\mathbb{E}\left(#1\right)}}
\newcommand{\GB}[3]{{\left[\begin{array}{c}
                     #1 \\ #2
                     \end{array}\right]}_{#3}}
\newcommand{\mat}[1]{\ensuremath{\boldsymbol{#1}}}
\newcommand{\Gm}{\mat{G}}
\newcommand{\Hm}{\mat{H}}
\newcommand{\Pm}{\mat{P}}
\newcommand{\Qm}{\mat{Q}}
\newcommand{\Gpub}{\Gm_{\rm pub}}
\newcommand{\Hpub}{\Hm_{\rm pub}}
\newcommand{\Gsec}{\Gm_{\rm sec}}
\newcommand{\Hsec}{\Hm_{\rm sec}}
\newcommand{\Mspace}[2]{\mathcal{M}_{#1}(#2)}
\newcommand{\GL}[2]{\mathbf{GL}(#1, #2)}
\newcommand{\PGL}[2]{\mathbf{PGL}(#1, #2)}
\newcommand{\degq}{\textrm{deg}_q}
\newcommand{\qPol}{\mathcal{L}}
\newcommand{\qPold}[1]{\qPol_{< #1}}
\newcommand{\Gab}[2]{\mathcal{G}_{#1}(#2)}
\newcommand{\map}[4]{\left\{
\begin{array}{ccc}
#1 & \longrightarrow & #2 \\
#3 & \longmapsto     & #4
        \end{array}
\right.}
\newcommand{\dimfq}{\textup{dim}_{\Fq}}
\newcommand{\dimfqm}{\textup{dim}_{\Fqm}}
\newcommand{\Cpubd}{{\Cpub^{\bot}}}
\newcommand{\Csecd}{{\Csec^{\bot}}}
\newcommand{\Crand}{{\code{C}_{\textup{rand}}}}
\newcommand{\vspaceof}[1]{\left\langle #1\right\rangle}
\def\longversion{1}
\begin{document}
\maketitle

\begin{abstract}
  We present a polynomial time attack of a rank metric code based
  encryption scheme due to Loidreau for some parameters. 
\end{abstract}

\noindent {\bf Key words.} Rank metric codes, Gabidulin codes,
code based cryptography, cryptanalysis.

\medskip

\noindent {\bf MSC.} 11T71, 94B99. \section*{Introduction}
To instantiate a McEliece encryption scheme, one needs
a family of codes with random looking generator matrices and
an efficient decoding algorithm. While the original proposal
due to McEliece himself \cite{M78} relies on classical Goppa
codes endowed with the Hamming metric, one can actually consider
codes endowed with any other metric. The use of $\F_{q^m}$--linear
rank metric codes,
first suggested by Gabidulin {\em et. al.} \cite{GPT91} is of
particular interest, since the $\F_{q^m}$--linearity permits a
very ``compact'' representation of the code and hence permits to
design a public key encryption scheme with rather short keys
compared to the original McEliece proposal.

Compared to the Hamming metric world, only a few families of codes with
efficient decoding algorithms are known in rank metric. Basically,
the McEliece scheme has been instantiated with two general families of
rank metric codes, namely Gabidulin codes \cite{D78,G85} and
LRPC codes \cite{GMRZ13}.

In \cite{L17}, Loidreau proposed the use of codes which can
in some sense be regarded as an intermediary version between Gabidulin
codes and LRPC codes. These codes are obtained by right
multiplying a Gabidulin code with an invertible matrix whose
entries are in $\F_{q^m}$ and span an $\F_q$--subspace of
small dimension $\lambda$. This approach can be regarded as
a ``rank metric'' counterpart of the BBCRS scheme \cite{BBCRS16}
in Hamming metric.

In the present article, we explain why the case $\lambda = 2$ and
$\dim \Cpub \geq n/2$ is weak and describe a polynomial time key
recovery attack in this situation.

\paragraph{Note.} The material of the present article has been
communicated to Pierre Loidreau in April 2016. The article
\cite{L17} is subsequent to this discussion and proposes parameters
which avoid the attack described in the present article.

 \section{Prerequisites}
\subsection{Rank metric codes}
In this article $m, n$ denote positive integers and $q$ a prime power.
A {\em code of dimension $k$} is an $\Fqm$--subspace of $\Fqm^n$ whose
dimension as an $\Fqm$--vector space is $k$. Given a vector
$\xv \in \Fqm^n$, the {\em rank weight} or {\em rank} of $\xv$,
denoted as $\rkwt{\xv}$ is the dimension of the $\Fq$--vector subspace
of $\Fqm$ spanned by the entries of $\xv$.  The {\em support} of a
vector $\xv \in \Fqm^n$, denoted $\suppt{\xv}$ is the $\Fq$--vector
space spanned by the entries of $\xv$. Hence the rank of $\xv$ is
nothing but the dimension of its support.  The {\em rank distance} or
{\em distance} of two vectors $\xv, \yv \in \Fqm^n$ is defined as
$$
\dist{\xv}{\yv} \eqdef \rkwt{\xv-\yv}.
$$
Given a linear code $\CC \subseteq \Fqm^n$, the {\em minimum distance}
of $\CC$ is defined as
$$
\dmin{\CC} \eqdef \min_{\xv \in \CC \setminus \{0\}}
\left\{\rkwt{\xv}\right\}.
$$
Finally, given a code $\CC \subseteq \Fqm^n$, the {\em dual} code of
$\CC$, denoted by $\CC^\perp$ is the orthogonal of $\CC$ with respect
to the canonical inner product in $\Fqm$:
\[
\map{\Fqm \times \Fqm}{\Fqm}{(\xv, \yv)}{\sum_{i=1}^m x_i y_i.}
\]

\if\longversion1
\begin{remark}
  Note that rank metric codes can be defined in a more general setting
  as subspaces of a space of matrices or spaces of morphisms between
  two vector spaces. A code $\CC \subseteq \Fqm^n$ can be regarded as
  a space of matrices by choosing an $\Fq$--basis of $\Fqm$ and
  associating to each vector $\cv \in \CC$ the matrix whose $i$--th
  column is the decomposition of the entry $c_i$ in the basis.  A
  major difference between general spaces of matrices and the codes we
  introduced is that our vector spaces have an $\Fqm$--linear
  structure. We chose to limit our presentation to codes of the form
  $\CC \subseteq \Fqm^n$ since only these codes are the object of the
  study in the present article.
\end{remark}
\fi

\subsection{$q$--polynomials and Gabidulin codes}
A {\em $q$--polynomial} or a {\em linearized polynomial} is an
$\Fqm$--linear combination of monomials
$X, X^q, X^{q^2}, \ldots, X^{q^{s}}, \ldots$ Such a polynomial induces
a function $\Fqm \rightarrow \Fqm$ which is $\Fq$--linear.  The {\em
  $q$--degree} of a $q$--polynomial $P$, denoted by $\degq (P)$ is the
integer $s$ such that the degree of $P$ is $q^s$. In short:
\[
P = \sum_{i = 0}^{\degq (P)}p_i X^{q^i},\ p_i\in\Fqm,\ p_{\degq
  (P)}\neq 0.
\]
The space of $q$--polynomials is denoted by $\qPol$ and, given a
positive integer $s$, the space of $q$--polynomials of degree less
than $s$ is denoted by
$$
\qPold{s} \eqdef \left\{ P \in \qPol ~|~ \degq (P) < s\right\}.
$$
Given positive integers $k, n$ with $k \leq n \leq m$ and an
$n$--tuple $\av = (a_1, \ldots, a_n)$ of $\Fq$--linearly independent
elements of $\Fqm$, the {\em Gabidulin code} $\Gab{k}{\av}$ is defined
as
$$
\Gab{k}{\av} \eqdef \{ (f(a_1), \ldots, f(a_n)) ~|~ f \in \qPold{k}
\}.
$$
\if\longversion1 This code has a generator matrix of the form:
$$
\begin{pmatrix}
  a_1 & \ldots & a_n\\
  a_1^q & \ldots & a_n^q \\
  \vdots & & \vdots \\
  a_1^{q^{k-1}} & \ldots & a_n^{q^{k-1}}
\end{pmatrix}.
$$
\fi Such codes are known to have minimum distance $n-k+1$ and to
benefit from a decoding algorithm correcting up to half the minimum
distance (see \cite{L06a}).

Note that the vector $\av$ is not unique as shown by the following
lemma which will be useful for our attack.

\begin{lemma}[{\cite[Theorem 2]{B03}}]\label{lem:supports}
  Let $\alpha \in \Fqm \setminus \{0\}$. Then
  $ \Gab{k}{\av} = \Gab{k}{\alpha\av}.  $
\end{lemma}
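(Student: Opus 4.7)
The plan is to exhibit, for each codeword of $\Gab{k}{\alpha \av}$, a $q$-polynomial of $q$-degree less than $k$ that produces the same codeword when evaluated at $\av$, and then use symmetry (replacing $\alpha$ by $\alpha^{-1}$) to get the reverse inclusion.

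The key observation is that composition of a $q$-polynomial with the linear map $X \mapsto \alpha X$ is again a $q$-polynomial of the same $q$-degree. Concretely, given $f = \sum_{i=0}^{k-1} p_i X^{q^i} \in \qPold{k}$, I would define
\[
f_\alpha(X) \eqdef f(\alpha X) = \sum_{i=0}^{k-1} p_i \alpha^{q^i} X^{q^i}.
\]
Since $\alpha \neq 0$, the leading coefficient $p_{\degq(f)}\alpha^{q^{\degq(f)}}$ is nonzero, so $\degq(f_\alpha) = \degq(f) < k$, which shows $f_\alpha \in \qPold{k}$. Then by construction
\[
\bigl(f(\alpha a_1), \ldots, f(\alpha a_n)\bigr) = \bigl(f_\alpha(a_1), \ldots, f_\alpha(a_n)\bigr),
\]
and the right-hand side lies in $\Gab{k}{\av}$. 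This yields $\Gab{k}{\alpha \av} \subseteq \Gab{k}{\av}$.

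For the reverse inclusion, I would simply apply the same argument with $\alpha$ replaced by $\alpha^{-1}$ (which is well-defined since $\alpha \neq 0$), noting that $\alpha^{-1}(\alpha\av) = \av$. This gives $\Gab{k}{\av} = \Gab{k}{\alpha^{-1}(\alpha\av)} \subseteq \Gab{k}{\alpha\av}$. Combining both inclusions gives the equality.

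There is no real obstacle here; the proof is essentially a one-line computation once one notices that $f \mapsto f \circ (\alpha \cdot)$ is an $\Fqm$-linear bijection from $\qPold{k}$ to itself. The only thing worth checking carefully is that the $q$-degree is preserved, which uses precisely the hypothesis $\alpha \neq 0$.
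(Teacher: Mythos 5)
Your proof is correct. Note, though, that the paper does not prove this lemma at all: it simply invokes \cite[Theorem 2]{B03}, where the statement appears as a consequence of Berger's characterization of the isometries (and of the possible support vectors) of Gabidulin codes, a strictly more general result. Your argument is the natural self-contained alternative: since $(\alpha X)^{q^i}=\alpha^{q^i}X^{q^i}$, precomposition with the $\Fq$-linear bijection $X \mapsto \alpha X$ maps $\qPold{k}$ onto itself while preserving the $q$-degree (this is where $\alpha \neq 0$ is used), so every evaluation of $f$ at $\alpha\av$ is an evaluation of $f_\alpha$ at $\av$, giving $\Gab{k}{\alpha\av} \subseteq \Gab{k}{\av}$, and substituting $\alpha^{-1}$ gives the reverse inclusion. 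What the citation buys is the full classification (e.g.\ exactly which $\av'$ give the same code), which is not needed here; what your direct computation buys is a two-line proof of precisely the fact used in the attack. One small point you may add for completeness: since $x \mapsto \alpha x$ is an $\Fq$-linear bijection of $\Fqm$, the entries of $\alpha\av$ remain $\Fq$-linearly independent, so $\Gab{k}{\alpha\av}$ is indeed a legitimate Gabidulin code with the same parameters.
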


\subsection{The component-wise Frobenius map}
In what follows, we will frequently apply the component-wise Frobenius
map or its iterates to vectors or codes. Hence, we introduce the
following notation. Given a vector $\vv \in \Fqm^n$ and a non-negative
integer $s$, we denote by $\frob{\vv}{s}$ the vector:
$$
\frob{\vv}{s} \eqdef (v_1^{q^s}, \ldots, v_n^{q^s}).
$$
Similarly, given a code $\CC \subseteq \Fqm^n$ and a positive
integer $s$, the code $\frob{\CC}{s}$ denotes the code
$$
\frob{\CC}{s} \eqdef \{\frob{\cv}{s} ~|~ \cv \in \CC\}.
$$

\subsection{Overbeck's distinguisher}
In \cite{O08}, Overbeck proposes a general framework to break cryptosystems
based on Gabidulin codes.
The core of his attack is that a simple operation allows us to distinguish
Gabidulin codes from random ones. Indeed, given a random
code $\CC \subseteq \Fqm^n$ of dimension $k < n/2$, the expected
dimension of the code $\CC + \frob{\CC}{1}$ equals $2k$ and, equivalently
$\CC \cap \frob{\CC}{1}$ is likely to be equal to $0$. More generally,
we have the following statement.

\begin{proposition}\label{prop:random}
  If $\Crand$ is a code of length $n$ and dimension $k$ chosen
  uniformly at random, then for a non-negative integer $a$ and for a
  positive integer $s < k$, we have
\[
\prob \bigg(\dimfqm\ \Crand + \frob{\Crand}{1}+\cdots+ \frob{\Crand}{s} \leq \min(n, (s+1)k) - a\bigg) = O(q^{-ma}).
\]
\end{proposition}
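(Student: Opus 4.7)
The plan is to bound the probability by a union-bound argument on the left kernel of an explicit generator matrix of the sum. I first represent $\Crand$ by a uniformly random full-rank matrix $G \in \Fqm^{k \times n}$, so that
\[
M \eqdef \begin{pmatrix} G \\ \frob{G}{1} \\ \vdots \\ \frob{G}{s} \end{pmatrix} \in \Fqm^{(s+1)k \times n}
\]
is a generator matrix for $\Crand + \frob{\Crand}{1} + \cdots + \frob{\Crand}{s}$. Setting $p \eqdef (s+1)k$ and $N \eqdef \min(n,p)$, the event to analyse reduces to $\operatorname{rank}(M) \leq N - a$. I focus on the regime $p \leq n$; the case $p > n$ is treated analogously, with $a$ replaced by $p - n + a \geq a+1$ in the kernel-dimension count, yielding an even sharper estimate.

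In the regime $p \leq n$, rank deficiency of $M$ by $a$ is equivalent to the existence of an $\Fqm$-subspace $K \subseteq \Fqm^p$ of dimension $a$ lying in the left kernel of $M$. A union bound yields
\[
\prob(\operatorname{rank}(M) \leq p - a) \leq \sum_{\substack{K \subseteq \Fqm^{p}\\ \dim K = a}} \prob(K \cdot M = 0).
\]
For fixed $K$, the condition $K \cdot M = 0$ translates column by column of $G$ into $\ell_\mu(g_c) = 0$ for every $\mu \in K$ and every column $g_c$ of $G$, where
\[
\ell_\mu : \Fqm^k \longrightarrow \Fqm, \qquad g \longmapsto \sum_{j=0}^s \mu_j \cdot \frob{g}{j}.
\]

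The crucial technical step is the observation that the $\Fq$-linear map $\mu \mapsto \ell_\mu$ from $\Fqm^p$ to $\mathrm{Hom}_{\Fq}(\Fqm^k, \Fqm)$ is injective whenever $s < m$: if $\ell_\mu \equiv 0$, isolating each coordinate of $g$ produces a $q$-polynomial $\sum_j \mu_{j,l} X^{q^j}$ of $q$-degree at most $s$ vanishing identically on $\Fqm$, which forces $\mu = 0$ because $s < m$. Since $s < k \leq m$, this injectivity is available. It follows that for generic $K$, the $\Fq$-subspace $V_K \eqdef \bigcap_{\mu \in K}\ker \ell_\mu \subseteq \Fqm^k$ has $\Fq$-codimension exactly $am$, so $|V_K| = q^{m(k-a)}$; as the $n$ columns of $G$ are independent, $\prob(K \cdot M = 0) \leq (|V_K|/q^{mk})^n = q^{-mna}$. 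Combining with the Gaussian binomial estimate $\binom{p}{a}_{q^m} = O(q^{ma(p-a)})$ for the number of such $K$'s, one obtains
\[
\prob(\operatorname{rank}(M) \leq p - a) = O\!\left(q^{ma(p - a - n)}\right) = O(q^{-ma}),
\]
since $p \leq n$ forces $p - a - n \leq -a$.

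The hard part is less the union bound itself than the uniform control of $|V_K|$: for ``non-generic'' $K$, the combined $\Fq$-linear evaluation map $\Fqm^k \to \Fqm^a$ may fail to be surjective and $|V_K|$ may exceed $q^{m(k-a)}$. A stratification of $K$'s according to the rank of this evaluation map, paired with a count of how many $K$'s lie in each stratum, is the main technical obstacle; once it is carried out, one checks that these atypical $K$'s contribute a negligible term and the asymptotic bound $O(q^{-ma})$ survives.
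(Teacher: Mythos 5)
Your reduction of the event to a rank deficiency of the stacked matrix $M$, the passage to $a$--dimensional subspaces $K$ of the left kernel, and the injectivity of $\mu \mapsto \ell_\mu$ for $s<m$ are all fine. The genuine gap is exactly the point you defer at the end: the uniform control of $\prob(K\cdot M=0)$. Injectivity of $\mu\mapsto\ell_\mu$ only tells you that each $\ell_\mu$, $\mu\neq 0$, is a nonzero $\Fq$--linear map; since it is built from $q$--polynomials of $q$--degree at most $s$, its image may have $\Fq$--dimension as small as $m-s$, so for a badly chosen $K$ the per-column probability is only bounded by $q^{-a(m-s)}$ rather than $q^{-am}$, giving $\prob(K\cdot M=0)\leq q^{-na(m-s)}$. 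Against the $\approx q^{ma(p-a)}$ choices of $K$, the worst-case exponent $ma(p-a)-na(m-s)$ is in general positive (already for $p$ close to $n$ and $s$ not tiny), so the naive union bound collapses. The whole content of the proposition is therefore the stratification you announce but do not carry out: one must show that the number of $K$ whose joint evaluation map $\Fqm^k\to\Fqm^a$ has rank deficiency $d$ decays fast enough (roughly like $q^{-nd}$ relative to the generic count) to offset the gain $q^{nd}$ in $|V_K|$. Asserting that ``once it is carried out, one checks that these atypical $K$'s contribute a negligible term'' is not a proof, and it is not obviously true without an argument of essentially the same depth as the counting the paper performs (its bound on $\card{E_t}$, obtained via reduced echelon forms of the relation matrix and the fact that a nonzero $q$--polynomial of $q$--degree at most $a$ has at most $q^a$ roots).

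Two further, more minor points. First, if $G$ is drawn uniformly among \emph{full-rank} $k\times n$ matrices, its columns are not independent, so the step $\prob(K\cdot M=0)\leq (\card{V_K}/q^{mk})^n$ needs a fix (e.g.\ sample $G$ uniformly among all matrices and condition on full rank, absorbing the conditioning into the constant). Second, for orientation: the paper attacks the same first-moment problem from the dual side, bounding $\esp{|\ker\Psi|}$ for the $\Fq$--linear syndrome map $\Psi(\cv_0,\dots,\cv_s)=\cv_0+\frob{\cv_1}{1}+\cdots+\frob{\cv_s}{s}$ and concluding by Markov's inequality; the stratification by the dimension $t$ of $\vspaceof{\xv_0,\dots,\xv_s}$ there plays the role of your stratification by rank deficiency here. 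Your route could likely be completed, but only by supplying an analogue of that counting lemma, which is the missing core of your argument.
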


\if\longversion1
\begin{proof}
  See Appendix~\ref{subsec:proof_sum_frob}.\qed
\end{proof}
\fi

On the other hand, for a Gabidulin code, the behaviour with respect to
such operations is completely different, as explained in the following
statement.

\begin{proposition}
  \label{prop:gab_frob}
  Let $\av\in\Fqm^n$ be a word of rank $n$, $k\leq n$ and $s$ be an
  integer. Then,
  \begin{eqnarray}
    \label{eq:classic} \Gab{k}{\av} \cap \frob{\Gab{k}{\av}}{1} &=&
                                         \Gab{k-1}{\frob{\av}{1}}; \\
    \label{eq:overbeck} \Gab{k}{\av} + \cdots + \frob{\Gab{k}{\av}}{s} &=&
                                         \Gab{k+s}{\av}.
  \end{eqnarray}
\end{proposition}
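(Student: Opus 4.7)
The plan is to treat the two equalities separately. The second one is largely a bookkeeping computation on coefficients of $q$--polynomials; the first is the substantive one and hinges on the root bound for linearized polynomials.

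For the identity (\ref{eq:overbeck}), I would first describe $\frob{\Gab{k}{\av}}{r}$ explicitly. If $f = \sum_{j=0}^{k-1} f_j X^{q^j} \in \qPold{k}$, then $f(a_i)^{q^r} = \sum_{j=0}^{k-1} f_j^{q^r} a_i^{q^{j+r}}$, so raising to the $q^r$--th power all coordinates of a codeword amounts to replacing $f$ by the $q$--polynomial $\sum_{j=0}^{k-1} f_j^{q^r} X^{q^{j+r}}$. Since $x \mapsto x^{q^r}$ is a bijection of $\Fqm$, the coefficients $f_j^{q^r}$ range over all of $\Fqm$, so $\frob{\Gab{k}{\av}}{r}$ is the evaluation code on $\av$ of the $\Fqm$--subspace of $\qPol$ spanned by $X^{q^r}, X^{q^{r+1}}, \ldots, X^{q^{r+k-1}}$. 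Summing over $r = 0, \dots, s$, these subspaces together cover $X^{q^0}, \ldots, X^{q^{k+s-1}}$, which span $\qPold{k+s}$; hence the sum equals $\Gab{k+s}{\av}$.

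For (\ref{eq:classic}), the easy inclusion $\supseteq$ is obtained by rewriting any $g \in \qPold{k-1}$ as $g(a_i^q)=\sum_{j=0}^{k-2} g_j a_i^{q^{j+1}}$: this expression puts the codeword simultaneously in $\Gab{k}{\av}$ (via the $q$--polynomial $\sum g_j X^{q^{j+1}} \in \qPold{k}$) and in $\frob{\Gab{k}{\av}}{1}$ (since Frobenius is bijective on $\Fqm$, we can find $h_j$ with $h_j^q = g_j$). For the reverse inclusion, take $\cv$ in the intersection, written as $c_i = f(a_i) = h(a_i)^q$ for some $f, h \in \qPold{k}$. Subtracting yields a linearized polynomial
\[
P(X) \eqdef f(X) - h(X)^{q} \in \qPold{k+1}
\]
that vanishes on $a_1, \dots, a_n$, and hence, by $\Fq$--linearity of $P$, on the whole $\Fq$--vector space they span, which has dimension $n$. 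Since a nonzero $q$--polynomial of $q$--degree $d$ has at most $q^d$ roots and $\degq(P)\leq k\leq n$, the polynomial $P$ is identically zero. Matching coefficients gives $f_0 = 0$, $h_{k-1} = 0$, and $f_j = h_{j-1}^q$ for $1 \leq j \leq k-1$; so $f(X) = \sum_{l=0}^{k-2} h_l^q X^{q^{l+1}}$ and therefore $c_i = \tilde f(a_i^q)$ with $\tilde f(X) = \sum_{l=0}^{k-2} h_l^q X^{q^l} \in \qPold{k-1}$. This exhibits $\cv \in \Gab{k-1}{\frob{\av}{1}}$.

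The single substantive step is the root bound for $q$--polynomials, which is the main (and only) obstacle: everything else is direct substitution. I would invoke this bound as a standard fact, noting that it uses exactly the hypothesis $k \leq n$ on the length of $\av$ and the $\Fq$--linear independence of its entries.
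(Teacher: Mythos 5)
Your proof is correct, but it takes a genuinely different and more self-contained route than the paper's. For (\ref{eq:overbeck}) the paper simply cites Overbeck, whereas you reprove it by identifying $\frob{\Gab{k}{\av}}{r}$ with the evaluation code on $\av$ of the span of $X^{q^r},\dots,X^{q^{r+k-1}}$ and noting that these windows together span $\qPold{k+s}$; that is the standard argument and is fine. For (\ref{eq:classic}) the paper only checks the easy inclusion on the generators $\frob{\av}{1},\dots,\frob{\av}{k-1}$ and then concludes by a dimension count, using (\ref{eq:overbeck}) with $s=1$ to get $\dim\big(\Gab{k}{\av}+\frob{\Gab{k}{\av}}{1}\big)=k+1$; you instead prove the hard inclusion directly: a word in the intersection yields a linearized polynomial $P=f-h^q$ of $q$--degree at most $k$ vanishing on the $n$--dimensional support of $\av$, hence $P=0$, and coefficient matching exhibits the word as the evaluation at $\frob{\av}{1}$ of a polynomial in $\qPold{k-1}$. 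Your route buys independence from (\ref{eq:overbeck}) and from the citation; the paper's route buys brevity. One small caveat: to conclude $P=0$ you need the root bound $q^{\degq(P)}\leq q^{k}$ to be strictly smaller than $q^{n}$, so your inequality $\degq(P)\leq k\leq n$ should read $\degq(P)\leq k<n$, i.e.\ $k+1\leq n$. This is harmless: the identity itself fails at $k=n$ (there $\Gab{n}{\av}=\Fqm^n$, so the intersection is all of $\Fqm^n$ while $\Gab{n-1}{\frob{\av}{1}}$ has dimension $n-1$), and the paper's dimension count needs $k+1\leq n$ for exactly the same reason, so the strict inequality is an implicit hypothesis on both sides.
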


\begin{proof}
  For (\ref{eq:overbeck}) see \cite[Lemma 5.1]{O08}. Identity
  (\ref{eq:classic})
   that can be proved as
  follows: for inclusion ``$\varsupsetneq$'', one checks that
  the generators $\frob{\av}{i}$ for $i = 1, \dots ,k-1$ of
  $\Gab{k-1}{\frob{\av}{1}}$ are in $\Gab{k}{\av} \cap \frob{\Gab{k}{\av}}{1}$.  Conversely, the two codes have the same
  dimension because of (\ref{eq:overbeck}) for $s = 1$. \qed
\end{proof}

\if\longversion1
\begin{example}
  As an illustration of the difference of behaviour of Gabidulin codes
  compared to random codes, given a Gabidulin code $\code{G}$ of length $n$
  and dimension $k < n/2$, then $\dim \code{G}+ \frob{\code{G}}{1} = k+1$,
  while for a random code $\Crand$ of the same length and dimension,
  $\prob\big(\dim \Crand + \frob{\Crand}{1} = 2k\big)$ tends to
  $1$ when $m$ tends to infinity.
\end{example}
\fi

 \section{Loidreau's scheme}
In order to mask the structure of Gabidulin codes and to resist to
Overbeck's attack, Loidreau suggested in \cite{L17} the following
construction.  Denote by $\Gm$ a random generator matrix of a Gabidulin code
$\Gab{k}{\bv}$.  Fix an integer $\lambda \leq m$  and an $\Fq$--vector
subspace $\V$ of $\Fqm$ of dimension $\lambda$.
Let $\Pm \in \GL{n}{\Fqm}$ whose entries are all in $\V$. Then, let
$$
\Gpub \eqdef \Gm \Pm^{-1}.
$$
We have the following encryption scheme.
\begin{description}
\item[{\bf Public key:}] The pair $(\Gpub, t)$ where 
  $t \eqdef \lfloor \frac{n-k}{2\lambda} \rfloor$.
\item[{\bf Secret key:}] The pair $(\bv, \Pm)$.
\item[{\bf Encryption:}] Given a plain text $\mv \in \Fqm^k$, choose a
  uniformly random vector $\ev \in \Fqm^n$ of rank weight $t$. The cipher text
  is
  $$
  \cv \eqdef  \mv \Gpub + \ev.
  $$
\item[{\bf Decryption:}] Compute,
  $$
  \cv \Pm = \mv \Gm + \ev \Pm.
  $$
  Since the entries of $\Pm$ are all in $\V$ then, the entries
  of $\ev \Pm$ are in the product space $\suppt{\ev} \cdot \V
  \eqdef \langle uv ~|~ u\in \suppt{\ev},\
  v\in \V\rangle_{\Fq}$. The
  dimension of this space is bounded from above by
  $t\lambda \leq \frac{n-k}{2}$. Therefore, using a classical decoding
  algorithm for Gabidulin codes, one can recover $\mv$.
\end{description}

 \section{A distinguisher}
\label{sec:distinguisher}
\subsection{Context}
The goal of this section is to establish a distinguisher for
Loidreau's cryptosystem instantiated with $\lambda = 2$ and a public
code $\Cpub$ of dimension $k \geq \frac{n}{2}$. Similar to Overbeck's
attack, this distinguisher relies on Propositions~\ref{prop:random}
and \ref{prop:gab_frob}.

Similar to the attacks of BBCRS system~\cite{CGGOT14,COTG15}, it is
more convenient to work on the dual of the public code because of the
following lemmas.  \if\longversion1 By {\em dual}, we mean the
orthogonal code with respect to the canonical inner product
\[
\map{\Fqm^n \times \Fqm^n}{\Fqm}{(\xv, \yv)}{
  \sum_{i=1}^n x_i y_i}.
\]
\fi
following lemmas.

\begin{lemma}[{\cite[Theorem~7]{G85}}]
  \label{lem:gab_dual}
  Let $\bv\in\Fqm^n$ of rank $n$. The code $\Gab{k}{\bv}^\perp$ is a
  Gabidulin code $\Gab{n-k}{\av}$ for some $\av\in\Fqm^n$ of rank $n$.
\end{lemma}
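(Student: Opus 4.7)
The plan is to argue by dimension count coupled with an explicit construction. Since $\dimfqm \Gab{k}{\bv}^\perp = n - k$ (by the general identity $\dim \CC^\perp = n - \dim \CC$) and $\dimfqm \Gab{n-k}{\av} = n-k$ (by definition of a Gabidulin code), it suffices to exhibit $\av \in \Fqm^n$ of rank $n$ such that $\Gab{n-k}{\av} \subseteq \Gab{k}{\bv}^\perp$. Writing out the Moore matrix generators, this inclusion is equivalent to the system
\[
\sum_{\ell=1}^n b_\ell^{q^i} a_\ell^{q^j} = 0 \qquad (0 \leq i \leq k-1,\ 0 \leq j \leq n-k-1).
\]
Raising the $(i,j)$-equation to the power $q^{m-i}$ (an $\Fqm$-automorphism) collapses these $k(n-k)$ conditions into the $n-1$ conditions $\sum_\ell b_\ell\, a_\ell^{q^s} = 0$ as $s = j-i$ runs over the window $\{1-k, \ldots, n-k-1\}$ modulo $m$.

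To solve this reduced system I would use that $\rkwt{\bv}=n$ forces the full $n \times n$ Moore matrix $M_n(\bv) = (b_\ell^{q^i})_{0 \leq i \leq n-1,\, 1 \leq \ell \leq n}$ to be invertible over $\Fqm$. Let $\wv \in \Fqm^n$ be the unique vector with $M_n(\bv)\, \wv^\top = (0,\ldots,0,1)^\top$, so that $\sum_\ell b_\ell^{q^s} w_\ell = \delta_{s, n-1}$ for $0 \leq s \leq n-1$. I then set $a_\ell = w_\ell^{q^{k-n+1}}$; applying the appropriate power of Frobenius to each equation $\sum_\ell b_\ell\, a_\ell^{q^s} = 0$ turns it into $\sum_\ell b_\ell^{q^{n-1-k-s}} w_\ell = 0$, and as $s$ traverses $\{1-k, \ldots, n-k-1\}$ the exponent $n-1-k-s$ bijects onto $\{0, \ldots, n-2\}$, so every required relation is among those defining $\wv$.

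The main obstacle is to verify $\rkwt{\av} = n$, equivalently $\rkwt{\wv} = n$ since Frobenius preserves $\Fq$-rank. Suppose $\alpha \in \Fq^n$ satisfies $\sum_\ell \alpha_\ell w_\ell = 0$. Because the rows of $M_n(\bv)$ form an $\Fqm$-basis of $\Fqm^n$, and each row $R_i = (b_\ell^{q^i})_\ell$ pairs with $\wv$ as $\delta_{i, n-1}$, the relation forces $\alpha = \sum_{i=0}^{n-2} \gamma_i R_i$ for some $\gamma_i \in \Fqm$; equivalently, $\alpha_\ell = P(b_\ell)$ for $P = \sum_{i=0}^{n-2} \gamma_i X^{q^i} \in \qPold{n-1}$. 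Since $\alpha_\ell \in \Fq$, the $q$-polynomial $P(X)^q - P(X)$ has $q$-degree at most $n-1$ and vanishes on the $n$-dimensional $\Fq$-subspace $\suppt{\bv}$; as a nonzero $q$-polynomial of $q$-degree $d$ has $\Fq$-kernel of dimension at most $d$, this forces $P^q - P \equiv 0$, and comparing coefficients then yields $P \equiv 0$ and hence $\alpha = 0$. This rank verification via linearized polynomials is the substantive step; the orthogonality reduces to bookkeeping once $\wv$ is in hand.
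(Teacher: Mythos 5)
Your proposal is correct. Note that the paper itself contains no proof of this lemma: it is imported verbatim from Gabidulin's article \cite{G85} (Theorem~7), so there is no internal argument to compare against; what you have written is a correct, self-contained reconstruction of the classical duality proof. The structure is sound: the reduction of the $k(n-k)$ orthogonality conditions $\sum_\ell b_\ell^{q^i}a_\ell^{q^j}=0$ to the $n-1$ ``collapsed'' conditions $\sum_\ell b_\ell a_\ell^{q^s}=0$, $s\in\{1-k,\dots,n-k-1\}$, is legitimate because the Frobenius is a field automorphism of \Fqm{} (so twisting an equation by any power, positive or negative, is an equivalence); the vector $\wv$ exists and is unique because the $n\times n$ Moore matrix of $\bv$ is invertible, which uses exactly the hypothesis $\rkwt{\bv}=n$ (and implicitly $n\leq m$); the bookkeeping showing that $\av\eqdef\frob{\wv}{k-n+1}$ satisfies all collapsed conditions, the exponent $n-1-k-s$ sweeping $\{0,\dots,n-2\}$, checks out. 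The substantive step, $\rkwt{\wv}=n$, is also handled correctly: an $\Fq$-relation $\sum_\ell\alpha_\ell w_\ell=0$ forces $\alpha_\ell=P(b_\ell)$ with $P\in\qPold{n-1}$, the condition $\alpha_\ell\in\Fq$ makes the $q$-polynomial realizing $P^q-P$ (of $\degq$ at most $n-1$) vanish on the $n$-dimensional space $\suppt{\bv}$, which forces it to be the zero polynomial, and the coefficient comparison ($\gamma_0=0$, then $\gamma_i=\gamma_{i-1}^q$ inductively) gives $P\equiv 0$, hence $\alpha=0$. Two minor glosses worth making explicit: $\dimfqm\Gab{n-k}{\av}=n-k$ requires $\rkwt{\av}\geq n-k$, which your rank computation supplies (so the dimension count does close the inclusion into an equality), and entry-wise Frobenius preserves $\Fq$-rank because it is an $\Fq$-linear bijection of \Fqm, which you correctly invoke to pass from $\wv$ to $\av$.
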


\begin{lemma}
  \label{lem:hpub_decomposition}
  Any full--rank generator matrix $\Hpub$ of $\Cpubd$ can be
  decomposed as \if\longversion1
  \[ \else $ \fi \Hpub = \Hsec \Pm^T \if\longversion1 \]
  \else $
  \fi where $\Hsec$ is a parity--check matrix of the Gabidulin code
  $\Gab{k}{\bv}$.
\end{lemma}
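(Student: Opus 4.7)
The plan is to show that duality interacts cleanly with right-multiplication by an invertible matrix, and then exploit the fact that any two full-rank generator matrices of the same code differ by an invertible change of basis on the left.

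First, I would set $\Csec \eqdef \Gab{k}{\bv}$, so that $\Csec$ is generated by $\Gm$ and, by construction of the scheme, $\Cpub = \Csec\, \Pm^{-1} \eqdef \{\cv\,\Pm^{-1} : \cv\in\Csec\}$. Next, I would establish the identity
\[
\Cpubd \;=\; \Csecd\, \Pm^T.
\]
This is a one-line calculation with the canonical inner product: a vector $\yv \in \Fqm^n$ lies in $\Cpubd$ iff for every $\cv \in \Csec$ one has $\yv\,(\cv\,\Pm^{-1})^T = \yv\,(\Pm^T)^{-1}\,\cv^T = 0$, that is, iff $\yv\,(\Pm^T)^{-1} \in \Csecd$, which is exactly $\yv \in \Csecd\,\Pm^T$.

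Once this is in hand, fix any parity-check matrix $\Hsec$ of $\Csec$, i.e.\ any full-rank generator matrix of $\Csecd$. Then $\Hsec\,\Pm^T$ is a full-rank generator matrix of $\Cpubd$ by the previous identity, together with the invertibility of $\Pm^T$. Now let $\Hpub$ be any other full-rank generator matrix of $\Cpubd$: since two full-rank generator matrices of the same code are related by an invertible base change, there exists $\Sm \in \GL{n-k}{\Fqm}$ such that $\Hpub = \Sm\,(\Hsec\,\Pm^T) = (\Sm\,\Hsec)\,\Pm^T$. Setting $\Hsec' \eqdef \Sm\,\Hsec$ gives another parity-check matrix of $\Csec = \Gab{k}{\bv}$ (since left-multiplying a parity-check matrix by an invertible matrix yields another one), and $\Hpub = \Hsec'\,\Pm^T$ is the sought decomposition.

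There is no real obstacle here: the only subtlety is to be careful with the transpose when pushing $\Pm^{-1}$ through the inner product, which is why $\Pm^T$ (and not $\Pm$) appears on the right. Everything else is the standard fact that the dual of $\code{C}\,\Pm^{-1}$ is $\Csecd\,\Pm^T$, plus the elementary uniqueness-up-to-invertible-base-change of full-rank generator matrices.
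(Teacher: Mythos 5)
Your proof is correct and rests on the same computation as the paper's: transposing the canonical inner product to see that dualizing turns right multiplication by $\Pm^{-1}$ into right multiplication by $\Pm^T$. The paper simply defines $\Hsec \eqdef \Hpub(\Pm^{-1})^T$ and checks directly that it is a parity-check matrix of $\Gab{k}{\bv}$, whereas you first establish $\Cpubd = \Csecd\,\Pm^T$ and then adjust by a left base change; this is the same argument, only organized slightly differently.
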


\begin{proof}
  Let $\Hpub$ be a full--rank generator matrix of $\Cpubd$. Since
  $\Cpub=\Csec P^{-1}$, there exists a full--rank generator matrix
  $\Gsec$ of $\Csec$ such that
  \[\Hpub(\Gsec P^{-1})^T = 0 \qquad\text{then}\qquad\Hpub (P^{-1})^T
  \Gsec^T= 0.\]
  Which means that $\Hpub (P^{-1})^T$ is a parity-check of
  $\Csec = \Gab{k}{\bv}$.\qed
\end{proof}

The convenient aspect of the previous lemma is that the matrix $\Pm$
has its entries in a small vector space, while its inverse has not.

\subsection{The case $\lambda = 2$}\label{ss:dist_l_2}

We suppose in this section that the vector space $\V \subseteq \Fqm^n$ in which
the matrix $\Pm$ has all its entries has dimension $2$:
\[\lambda = \dimfq \V = 2.\]
Note that, w.l.o.g, one can suppose
that $1 \in \V$. Indeed, if $\V$ is spanned over $\Fq$ by $\alpha, \beta
\in \Fqm\setminus \{0\}$, then one can replace $\Hsec$ by $\Hsec' =
\alpha \Hsec$
which spans the {\bf same code}, $\Pm' = \alpha^{-1} \Pm$ has entries in
$\V' = \Span{1, \alpha^{-1}\beta},$ and $\Hpub = \Hsec' \Pm'^T$.

Thus, from now on, we suppose that $\V = \Span{1, \gamma}$ for some $\gamma \in \Fqm \setminus \Fq$.
Consequently, $\Pm^T$ can be decomposed as
\[
  \Pm^T = \Pm_0 + \gamma \Pm_1,
\]
where $\Pm_0, \Pm_1$ are square matrices with entries in $\Fq$.

We have seen that $\Csecd = \Gab{n-k}{\av}$ for some $\av\in\Fqm^n$
with $\rkwt\av = n$.  We define
\[
  \gv \eqdef \av \Pm_0 \quad {\rm and} \quad   \hv \eqdef \av \Pm_1.
\]
\begin{lemma}\label{lem:basis_Cpubd}
The code $\Cpubd$ is spanned by
\[
  \gv + \gamma \hv, \frob{\gv}{1} + \gamma \frob{\hv}{1},
  \ldots , \frob{\gv}{n-k-1} + \gamma \frob{\hv}{n-k-1}.
\]  
\end{lemma}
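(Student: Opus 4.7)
The plan is to exhibit the claimed vectors as the rows of a parity-check matrix $\Hpub$ of $\Cpub$, then argue that they have the right count and are linearly independent.

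First I would use Lemma~\ref{lem:gab_dual} to write $\Csecd = \Gab{n-k}{\av}$, so that a parity-check matrix $\Hsec$ of $\Csec$ may be taken to be the Gabidulin generator matrix
\[
\Hsec = \begin{pmatrix} \av \\ \frob{\av}{1} \\ \vdots \\ \frob{\av}{n-k-1} \end{pmatrix}.
\]
Next I apply Lemma~\ref{lem:hpub_decomposition} to obtain a full-rank generator matrix $\Hpub = \Hsec \Pm^T$ of $\Cpubd$, whose $i$-th row equals $\frob{\av}{i}\,\Pm^T$.

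The key step is then the decomposition $\Pm^T = \Pm_0 + \gamma \Pm_1$ with $\Pm_0,\Pm_1$ over $\Fq$. Because the entries of these matrices are fixed by the Frobenius, it commutes with right-multiplication by them: $\frob{\av}{i}\Pm_j = \frob{(\av\Pm_j)}{i}$ for $j=0,1$. Combining this with the definitions $\gv = \av\Pm_0$, $\hv = \av\Pm_1$ yields
\[
\frob{\av}{i}\,\Pm^T = \frob{\av}{i}\Pm_0 + \gamma\,\frob{\av}{i}\Pm_1 = \frob{\gv}{i} + \gamma\,\frob{\hv}{i},
\]
so the $n-k$ claimed vectors are exactly the rows of $\Hpub$.

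Finally, since $\Pm$ (hence $\Pm^T$) is invertible and $\Hsec$ has full rank $n-k$, the matrix $\Hpub = \Hsec\Pm^T$ also has full rank $n-k = \dim \Cpubd$, so its rows form a basis. The main obstacle is almost cosmetic: one has to be careful that the Frobenius indeed commutes with multiplication by an $\Fq$-matrix, but this is immediate from $(x+y)^{q^i}=x^{q^i}+y^{q^i}$ and $c^{q^i}=c$ for $c\in\Fq$.
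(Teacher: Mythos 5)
Your proof is correct and takes essentially the same route as the paper's: both hinge on the decomposition $\Pm^T = \Pm_0 + \gamma \Pm_1$ and on the fact that the Frobenius (equivalently, any $q$--polynomial) commutes with right multiplication by a matrix over $\Fq$, the paper phrasing this codeword-wise as $P(\av)\Pm^T = P(\gv) + \gamma P(\hv)$ while you phrase it row-wise on the Moore matrix of $\av$. One cosmetic point: Lemma~\ref{lem:hpub_decomposition} is stated in the direction ``given $\Hpub$, there exists $\Hsec$'', so to assert that your chosen $\Hsec\Pm^T$ generates $\Cpubd$ you should either remark that two parity-check matrices of $\Csec$ differ by an invertible left factor, or simply redo the one-line orthogonality check $(\mv \Gsec \Pm^{-1})\cdot\bigl(\frob{\av}{i}\Pm^T\bigr)^T = \mv\Gsec\frob{\av}{i}^T = 0$ together with your full-rank argument.
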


\if\longversion1
\begin{proof}
For any $\cv\in\Cpubd$
there exists $P\in\qPold{n-k}$ such that
\[
\cv = P(\av)\Pm^T = P(\av)\Pm_0 + \gamma P(\av)\Pm_1 = P(\gv) + \gamma P(\hv),
\]
which yields the result.  \qed
\end{proof}
\fi

We can now state a crucial result.
\begin{theorem}\label{thm:main}
  The dual of the public code satisfies:
  \[
    \dimfqm\ \Cpubd + \frob{\Cpubd}{1} + \frob{\Cpubd}{2} \leq 2\ \dimfqm\ \Cpubd + 2.
  \]
\end{theorem}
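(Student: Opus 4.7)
The strategy is to exploit the explicit spanning set of $\Cpubd$ from Lemma~\ref{lem:basis_Cpubd} and group the generators of the iterated sum according to a common Frobenius index.

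Set $r = n-k = \dimfqm \Cpubd$. Applying the component-wise Frobenius $j$ times to the basis of Lemma~\ref{lem:basis_Cpubd}, one sees that $\frob{\Cpubd}{j}$ is spanned by the vectors $\frob{\gv}{i} + \gamma^{q^j}\frob{\hv}{i}$ for $i = j, j+1, \dots, j+r-1$, because the Frobenius sends $\gamma$ to $\gamma^{q^j}$ and fixes the $\Fq$--linear structure that produced $\gv$ and $\hv$ from $\av$ (recall $\Pm_0,\Pm_1$ have entries in $\Fq$). Consequently, $\Cpubd + \frob{\Cpubd}{1} + \frob{\Cpubd}{2}$ is spanned by the family
\[
\bigl\{ \frob{\gv}{i} + \gamma^{q^j}\frob{\hv}{i} \ \big|\ 0 \leq j \leq 2,\ j \leq i \leq j+r-1 \bigr\}.
\]

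Next I would partition this set by the index $i \in \{0, 1, \dots, r+1\}$ and observe that, for each $i$, every generator attached to $i$ lies in the two-dimensional $\Fqm$--space $W_i = \langle \frob{\gv}{i}, \frob{\hv}{i} \rangle$. Thus the contribution of each index is at most $2$, but for the extreme indices $i=0$ and $i=r+1$ only one value of $j$ is available, so these contribute at most $1$ each. Summing the per-index bounds gives
\[
\dimfqm\bigl(\Cpubd + \frob{\Cpubd}{1} + \frob{\Cpubd}{2}\bigr) \leq 1 + 2 + 2(r-2) + 2 + 1 = 2r+2,
\]
which is the claimed inequality. (The bounds at the intermediate indices $i \in \{1,\dots,r\}$ are automatic since $W_i$ is two-dimensional; one does not even need to check that the two or three generators at a given $i$ actually span $W_i$.)

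There is essentially no obstacle: the proof is a pure counting argument once the basis of Lemma~\ref{lem:basis_Cpubd} has been iterated under Frobenius. The only point to be slightly careful about is checking that the $\Fq$--entries of $\Pm_0,\Pm_1$ ensure $\frob{(\av\Pm_0)}{j} = \frob{\av}{j}\Pm_0$ and likewise for $\Pm_1$, so that the spanning family for $\frob{\Cpubd}{j}$ really takes the stated form with $\gamma$ replaced by $\gamma^{q^j}$ and $\gv,\hv$ replaced by their Frobenius iterates.
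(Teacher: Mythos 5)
Your proposal is correct and follows essentially the same route as the paper: both start from the spanning set of Lemma~\ref{lem:basis_Cpubd}, apply the Frobenius (using that $\Pm_0,\Pm_1$ have $\Fq$--entries, so only $\gamma$ moves), and then group the resulting generators of $\Cpubd + \frob{\Cpubd}{1} + \frob{\Cpubd}{2}$ by the Frobenius index $i$, with at most $2$ dimensions per intermediate index and $1$ at each extreme. The only cosmetic difference is that the paper records an explicit spanning family (reused later in Step~1 of the attack, where equality of spans via $\gamma\neq\gamma^q$ matters), whereas your per-index dimension count is all that is needed for the stated upper bound.
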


\if\longversion1
\begin{proof}
  Thanks to Lemma~\ref{lem:basis_Cpubd},
  we prove that $\Cpubd  + \frob{\Cpubd }{1}$ is spanned by
\[
  \gv + \gamma \hv, \cdots, \frob{\gv}{n-k-1} + \gamma \frob{\hv}{n-k-1}
  \quad {\rm and} \quad
  \frob{\gv}{1} + \gamma^q \frob{\hv}{1}, \ldots, \frob{\gv}{n-k} +
  \gamma^q \frob{\hv}{n-k}.
\]
Equivalently, $\Cpubd + \frob{\Cpubd}{1}$ is spanned by:
\[
  \gv + \gamma \hv \quad {\rm and} \quad
  \frob{\gv}{1}, \frob{\hv}{1}, \ldots,
  \frob{\gv}{n-k-1}, \frob{\hv}{n-k-1}
  \quad {\rm and} \quad
  \frob{\gv}{n-k} + \gamma^q \frob{\hv}{n-k}.
\]
Finally, a similar reasoning allows us to show that
$\Cpubd + \frob{\Cpubd}{1} + \frob{\Cpubd}{2}$ is spanned by
\[
  \gv + \gamma \hv \quad {\rm and} \quad
  \frob{\gv}{1}, \frob{\hv}{1}, \ldots,
  \frob{\gv}{n-k}, \frob{\hv}{n-k}
  \quad {\rm and} \quad
  \frob{\gv}{n-k+1} + \gamma^{q^2} \frob{\hv}{n-k+1},
\]
and hence has dimension at most $2(n-k)+2$.
  \qed
\end{proof}
\fi

As a conclusion, thanks to Proposition~\ref{prop:random},
we deduce that $\Cpubd$ is distinguishable in polynomial time
from a random code as soon as \mbox{$2k-2>n$} and $k < n-2$.

\subsection{The case $\lambda > 2$}
In the general case, the approach consists in computing
\[
  \Cpub^\perp + \frob{{\Cpub^\perp}}{1} + \cdots +
  \frob{{\Cpub^\perp}}{\lambda}.
\]
A similar reasoning permits to prove that the dimension of this space
is bounded from above by $\lambda \dim \Cpub^\perp + \lambda$.
Thanks to Proposition~\ref{prop:random}, such a code is distinguishable
from a random one if
\[
  \lambda\dim \Cpub^\perp + \lambda < \min \{n, (\lambda + 1)\dim \Cpub^\perp\}
\]
That is to say
\[
  \lambda (n-k+1) < n \qquad {\rm and} \qquad \lambda (n-k+1) <
  (\lambda + 1)(n-k).
\]
In summary, the public code is distinguishable from a random one if its dimension $k$ satisfies:
\[
  n\left(1 - \frac 1 \lambda\right) + 1 < k < n- \lambda.
\]

 \section{The attack}
In this section, we derive an attack from the distinguisher defined in
Section~\ref{sec:distinguisher}.
In what follows, we suppose that $\lambda = 2$ and the public code
has rate larger than $1/2$ so that the distinguisher introduced in
Section~\ref{sec:distinguisher} works on it.
Recall that $\Cpubd = \Gab{n-k}{\av}\Pm$ for some $\av \in \F_{q^m}^n$
whose entries are $\Fq$--independent and $\Pm$ is of the form
$\Pm_0 + \gamma \Pm_1$ for $\Pm_0, \Pm_1 \in \Mspace{n}{\F_q}$
and $\gamma \in \Fqm \setminus \Fq$. Finally recall that
\[
  \gv \eqdef \av \Pm_0 \qquad {\rm and} \qquad \hv \eqdef \av \Pm_1.
\]
In addition, we make the following assumptions:
\begin{enumerate}[(1)]
\item\label{item:assump1} 
  $\displaystyle \Gab{n-k+2}{\gv}\cap\Gab{n-k+2}{\hv}=\{0\}$
  and $\rkwt{\gv}\geq n-k+2$ and $\rkwt{\hv}\geq n-k+2$;
\item\label{item:assump2} $m > 2$ and $\gamma$ is not contained in any
  subfield of $\Fqm$.
\end{enumerate}

\noindent {\bf Comments on the assumptions.}
According to our experiments using {\sc Magma}~\cite{BCP97},
Assumption~(\ref{item:assump1}) holds with a high probability.  For
instance, we ran $1000$ tests for parameters : $q = 2, m = n = 30$ and
$k = 17$. The average rank of $\gv, \hv$ is $29.1$ and their minimum
rank is $27$. In addition for these $1000$ tests, the intersection
$\Gab{n-k+2}{\gv}\cap\Gab{n-k+2}{\hv}$ was always $\{0\}$.
Finally, Assumption~(\ref{item:assump2}) is reasonable in order to
prevent against possible attacks based on an exhaustive search of
$\gamma$.

\bigskip

The aim of the attack is to recover the triple $(\gamma, \gv, \hv)$,
or more precisely, to recover a triple $(\gamma', \gv', \hv')$ such
that
\begin{equation}\label{eq:triple}
  \Cpubd = \vspaceof{\gv'^{[i]} + \gamma' \hv'^{[i]} ~\Big|~
    i = 0, \dots, n-k-1}.
\end{equation}
Actually, the triple $(\gamma, \gv, \hv)$ is far from being unique and
any other triple satisfying~(\ref{eq:triple}) allows us to decrypt
messages (see further \S~\ref{ss:end}).
Let us describe an action of
$\PGL{2}{\F_q}$ on such triples.

\begin{proposition}\label{prop:triples}
  Let $a, b, c, d \in \Fq$ such that $ad-bc \neq 0$ and
  $\delta \in \Fqm$ such that
  $\gamma = \frac{a\delta + b}{c\delta +d}$. Then, the triple
  $(\delta, d\gv + b \hv, c \gv + a \hv)$ satisfies~(\ref{eq:triple}).
\end{proposition}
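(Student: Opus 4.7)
The plan is a direct verification based on Lemma~\ref{lem:basis_Cpubd}. That lemma tells us $\Cpubd$ is spanned by
\[
\frob{\gv}{i} + \gamma\,\frob{\hv}{i}, \qquad i = 0, \dots, n-k-1,
\]
and, since $\dimfqm \Cpubd = n-k$, this is actually a basis. To establish~(\ref{eq:triple}) for the new triple $(\delta, d\gv + b\hv, c\gv + a\hv)$, I will show that each proposed generator $\frob{(d\gv + b\hv)}{i} + \delta\,\frob{(c\gv + a\hv)}{i}$ is a nonzero $\Fqm$--scalar multiple of the corresponding basis vector above; rescaling each element of a basis by a nonzero scalar preserves the span, which yields the desired equality of codes.

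The computation itself is a one-liner. Since $a,b,c,d \in \Fq$ are fixed by every power of the Frobenius, they commute with the bracketed exponent, so
\[
\frob{(d\gv+b\hv)}{i} + \delta\,\frob{(c\gv+a\hv)}{i}
= (c\delta + d)\,\frob{\gv}{i} + (a\delta + b)\,\frob{\hv}{i}
= (c\delta + d)\bigl(\frob{\gv}{i} + \gamma\,\frob{\hv}{i}\bigr),
\]
where the last equality substitutes the hypothesis $\gamma=(a\delta+b)/(c\delta+d)$.

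The one point deserving a remark is that the scalar $c\delta + d$ is indeed nonzero, so the factorisation above is legal and the $n-k$ new generators are truly a basis of $\Cpubd$. This is already implicit in the statement, since the fraction defining $\gamma$ requires a nonzero denominator; more explicitly, $c\delta + d = 0$ with $c,d\in\Fq$ would force either $c = d = 0$ (ruled out by $ad-bc\neq 0$) or $\delta = -d/c \in \Fq$ (which is incompatible with $\gamma = (a\delta+b)/(c\delta+d)$ lying in $\Fqm\setminus\Fq$). I do not foresee any real obstacle: the proposition is a straightforward linear-algebra rewriting of the basis produced by Lemma~\ref{lem:basis_Cpubd}, and the specific arrangement of the coefficients $(d,b)$ on $(\gv,\hv)$ and $(c,a)$ on $(\gv,\hv)$ is precisely what is needed for the scalar $c\delta+d$ to factor cleanly and thereby identify the two spanning families.
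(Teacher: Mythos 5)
Your proof is correct and takes essentially the same route as the paper: the paper's entire argument is the same one-line identity, written there as $\gv^{[i]} + \frac{a\delta+b}{c\delta+d}\,\hv^{[i]} = \frac{1}{c\delta+d}\left((d\gv+b\hv)^{[i]} + \delta\,(c\gv+a\hv)^{[i]}\right)$, which rests on $a,b,c,d\in\Fq$ being fixed by the Frobenius, exactly as in your computation. Your explicit remark that $c\delta+d\neq 0$ is a harmless addition, already implicit in the hypothesis defining $\gamma$.
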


\if\longversion1
\begin{proof}
  It suffices to observe that for any $i \geq 0$,
  \[
    \gv^{[i]} + \frac{a \delta + b}{c \delta + d} \hv^{[i]} =
      \frac{1}{c \delta + d}\left((d \gv + b \hv)^{[i]}+
      \delta (c\gv + a \hv)^{[i]}\right).
  \]
  \qed
\end{proof}
\fi

\subsection{Step 1: using the distinguisher to compute some
  subcodes}
As shown in the proof of Theorem~\ref{thm:main},
$\Cpubd + \frob{\Cpubd}{1}$ is spanned by:
\[
  \gv + \gamma \hv \quad {\rm and} \quad
  \frob{\gv}{1}, \frob{\hv}{1},
  \ldots,
  \frob{\gv}{r}, \frob{\hv}{r}
  \quad {\rm and} \quad
  \frob{\gv}{r+1} + \gamma^q \frob{\hv}{r+1},
\]
where $r \eqdef n-k-1$.

\begin{lemma}
Under Assumption~(\ref{item:assump1}), we have that
\[(\Cpubd + \frob{\Cpubd}{1}) \cap (\frob{\Cpubd}{1} + \frob{\Cpubd}{2})\]
is spanned by:
\[
  \frob{\gv}{1} + \gamma^q \frob{\hv}{1}
  \quad {\rm and} \quad
  \frob{\gv}{2}, \frob{\hv}{2},
  \ldots,
  \frob{\gv}{r}, \frob{\hv}{r}
  \quad {\rm and} \quad
  \frob{\gv}{r+1} + \gamma^q \frob{\hv}{r+1},
\]  
\end{lemma}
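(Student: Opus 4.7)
The plan is to compute $\dim(A \cap B)$ where $A \eqdef \Cpubd + \frob{\Cpubd}{1}$ and $B \eqdef \frob{\Cpubd}{1} + \frob{\Cpubd}{2}$ by the Grassmann formula, then exhibit the $2r$ proposed vectors as an $\Fqm$-linearly independent family lying in this intersection. First, I would extract from Assumption~(\ref{item:assump1}) the structural fact that $\Gab{n-k+2}{\gv} + \Gab{n-k+2}{\hv}$ is a direct sum of $\Fqm$-subspaces of $\Fqm^n$, each of dimension $n-k+2 = r+3$; in particular the $2(r+3)$ vectors $\frob{\gv}{i}, \frob{\hv}{i}$ for $0 \leq i \leq r+2$ are $\Fqm$-linearly independent, and one can test membership of any $\Fqm$-linear combination in a given subspace by projecting separately onto the ``$\gv$-side'' and the ``$\hv$-side''.

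Next, using the spanning sets identified in the proof of Theorem~\ref{thm:main} (and, for $B$, its image under the Frobenius), I would verify that those spanning sets are in fact $\Fqm$-bases: any linear dependence would reduce, via the above projection argument, to a dependence among at most $r+3$ consecutive $\frob{\gv}{i}$'s or $\frob{\hv}{i}$'s, which is ruled out by the rank lower bounds in Assumption~(\ref{item:assump1}). This yields $\dim A = \dim B = 2(n-k)$ and $\dim(A+B) = 2(n-k)+2$, hence by the Grassmann formula
\[
\dim(A \cap B) = \dim A + \dim B - \dim(A+B) = 2(n-k) - 2 = 2r.
\]

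It then remains to check that the $2r$ proposed vectors, namely $\frob{\gv}{1} + \gamma^q \frob{\hv}{1}$, the $\frob{\gv}{i}, \frob{\hv}{i}$ for $i = 2, \ldots, r$, and $\frob{\gv}{r+1} + \gamma^q \frob{\hv}{r+1}$, all lie in $A \cap B$ and are $\Fqm$-linearly independent. For membership, the middle vectors appear among the generators of both $A$ and $B$; the first vector is a generator of $B$ and an $\Fqm$-combination of $\frob{\gv}{1}$ and $\frob{\hv}{1}$, which both appear in the spanning set of $A$, and symmetrically for the last. Linear independence is another application of the direct-sum projection argument. Since the count matches $\dim(A \cap B) = 2r$, these vectors form a basis of $A \cap B$.

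The main technical point is the sharpness of the three dimension counts for $A$, $B$, and $A + B$: Theorem~\ref{thm:main} only provides an upper bound on $\dim(A+B)$, and one needs equality throughout to apply the Grassmann formula. This is precisely where both parts of Assumption~(\ref{item:assump1}) --- the rank lower bounds on $\gv, \hv$ and the vanishing intersection of $\Gab{n-k+2}{\gv}$ and $\Gab{n-k+2}{\hv}$ --- intervene in an essential way.
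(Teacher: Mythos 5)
Your proof is correct, but it takes a genuinely different route from the paper's. The paper argues by coefficient matching: it takes an arbitrary element of the intersection, writes it once in the spanning set of $\Cpubd+\frob{\Cpubd}{1}$ and once in that of $\frob{\Cpubd}{1}+\frob{\Cpubd}{2}$, equates the two expressions, and uses $\Gab{r+3}{\gv}\cap\Gab{r+3}{\hv}=\{0\}$ to split the resulting relation into a $\gv$-part and an $\hv$-part that must vanish separately; this forces the coefficients of $\gv+\gamma\hv$ and of $\frob{\gv}{r+2}+\gamma^{q^2}\frob{\hv}{r+2}$ to be zero and ties the remaining extreme coefficients together, giving the inclusion of the intersection in the claimed span (the reverse inclusion being immediate, as you also observe). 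You instead pin down the exact dimensions $\dim\bigl(\Cpubd+\frob{\Cpubd}{1}\bigr)=\dim\bigl(\frob{\Cpubd}{1}+\frob{\Cpubd}{2}\bigr)=2(n-k)$ and $\dim\bigl(\Cpubd+\frob{\Cpubd}{1}+\frob{\Cpubd}{2}\bigr)=2(n-k)+2$, and conclude by the Grassmann formula plus membership and independence of the $2r$ listed vectors. These dimension counts do follow from Assumption~(1): the rank bounds make $\gv,\dots,\frob{\gv}{r+2}$ and $\hv,\dots,\frob{\hv}{r+2}$ bases of $\Gab{r+3}{\gv}$ and $\Gab{r+3}{\hv}$, and the trivial intersection makes the union of these two families free, which is exactly the structural fact the paper also uses. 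So neither argument needs a stronger hypothesis; yours has the merit of recording explicitly that the bound of Theorem~1 is attained with equality under Assumption~(1), while the paper's coefficient-matching is the template that is then iterated essentially verbatim for the chain of intersections $(\Cpubd+\frob{\Cpubd}{1})\cap\cdots\cap(\frob{\Cpubd}{r}+\frob{\Cpubd}{r+1})$ in Step~1, whereas your route would require redoing the three dimension counts at each stage.
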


\begin{proof}
Indeed, if a vector $\xv$ belongs to this intersection, we have that
\begin{align*}
  \begin{split}
    \xv ={}& x_{0,0}(\gv + \gamma \hv) + \sum_{i=1}^r
    \left(x_{0,i}\frob{\gv}{i} + y_{0,i} \frob{\hv}{i}\right)
    \\ &+ x_{0,r+1}\left(\frob{\gv}{r+1}+ \gamma^q
      \frob{\hv}{r+1}\right)
  \end{split}\\
  \begin{split}
    ={}& x_{1,0}\left(\frob{\gv}{1} + \gamma^q
      \frob{\hv}{1}\right) + \sum_{i=1}^r
    \left(x_{1,i}\frob{\gv}{i+1} + y_{1,i}
      \frob{\hv}{i+1}\right) \\ &+ x_{1,r+1}\left(\frob{\gv}{r+2}+
      \gamma^{q^2} \frob{\hv}{r+2}\right).
  \end{split}
\end{align*}
Hence
\begin{align*}
\begin{split}
{}&x_{0,0} \gv
+ (x_{0,1}-x_{1,0}) \frob{\gv}{1}
+ \sum_{i=2}^r (x_{0,i}-x_{1,i-1}) \frob{\gv}{i}\\
&+ (x_{0,r+1}-x_{1,r}) \frob{\gv}{r+1}
- x_{1,r+1}\frob{\gv}{r+2}
\end{split}\\
\begin{split}
={}&- x_{0,0} \gamma \hv
+ (\gamma^qx_{1,0}-y_{0,1}) \frob{\hv}{1}
+ \sum_{i=2}^r (y_{1,i-1}-y_{0,i}) \frob{\hv}{i}\\
&+ (y_{1,r}-x_{0,r+1}\gamma^q) \frob{\hv}{r+1}
+ x_{1,r+1} \gamma^{q^2}\frob{\hv}{r+2}.
\end{split}
\end{align*}

Since by Assumption~(\ref{item:assump1})
$\Gab{r+3}{\gv}\cap\Gab{r+3}{\hv}=\{0\}$, 
both sides of this equation are zero and in
particular, $x_{0,0}= x_{1, r+1} = 0$ and
$x_{0,1}=x_{1,0}=\gamma^{-q}y_{0,1}$.  This finally gives that
\[
  \xv\in \vspaceof{
  \frob{\gv}{1} + \gamma^q \frob{\hv}{1},
  \frob{\gv}{2}, \frob{\hv}{2},
  \ldots,
  \frob{\gv}{r}, \frob{\hv}{r},
  \frob{\gv}{r+1} + \gamma^q \frob{\hv}{r+1}}.
\]
\qed
\end{proof}

Then, a proof by induction shows that iterating intersections
\[(\Cpubd + \frob{\Cpubd}{1}) \cap (\frob{\Cpubd}{1} + \frob{\Cpubd}{2}) \cap \cdots
  \cap (\frob{\Cpubd}{r} + \frob{\Cpubd}{r+1}),
\]
yields the code spanned by
\[
\frob{\gv}{r} + \gamma^{q^r} \frob{\hv}{r} \quad {\rm and} \quad
\frob{\gv}{r+1} + \gamma^{q} \frob{\hv}{r+1}.
\]
Applying the inverse of the $r$--th Frobenius, we get the code spanned
by
\[
\gv + \gamma \hv \quad {\rm and} \quad
\frob{\gv}{1} + \gamma^{q^{1-r}} \frob{\hv}{1}.
\]
Next, one can compute
\begin{equation}\label{eq:code_1}
\Cpubd \cap \vspaceof{ \gv + \gamma \hv, \frob{\gv}{1} +
\gamma^{q^{1-r}}\frob{\hv}{1}} = \vspaceof{\gv + \gamma \hv},
\end{equation}
Indeed, if $\xv\in\Cpubd \cap \vspaceof{ \gv + \gamma \hv, \frob{\gv}{1} +
\gamma^{q^{1-r}}\frob{\hv}{1}}$, we can write for some $x$ and $y$,
\[
\xv = \sum_{i=0}^r
x_i\left(\frob{\gv}{i}+\gamma\frob{\hv}{i}\right)= y_0(\gv+\gamma\hv)
+ y_1\left(\frob{\gv}{1}+\gamma^{q^{1-r}}\frob{\hv}{1}\right).
\]
Assumption~(\ref{item:assump1}) gives that $x_1=y_1$ and
$x_1\gamma=y_1\gamma^{q^{1-r}}$. Finally with
Assumption~(\ref{item:assump2}), we deduce $y_1=0$ and hence
$\xv\in\vspaceof{\gv+\gamma\hv}$.

Now, we can compute
\begin{eqnarray}
  \nonumber
  \vspaceof{ \gv + \gamma \hv, \frob{\gv}{1} +
  \gamma^{q^{1-r}}\frob{\hv}{1} } &+& \frob{\vspaceof{ \gv + \gamma \hv
                                      }}{1} \\
  \nonumber
                                  & = & \vspaceof{ \gv + \gamma \hv,
                                        \frob{\gv}{1} + \gamma^{q^{1-r}}\frob{\hv}{1},
                                        \frob{\gv}{1} + \gamma^{q}\frob{\hv}{1}
                                        }\\
  \nonumber   & = & \vspaceof{  \gv + \gamma \hv, \frob{\gv}{1}, \frob{\hv}{1}
                    }.
\end{eqnarray}
Similarly, we compute the intersection with
$\frob{\Cpubd}{-1}\eqdef \frob{\Cpubd}{m-1}$ and get with Assumptions~(\ref{item:assump1}) and (\ref{item:assump2})
\begin{equation}\label{eq:code_2}
\frob{\Cpubd}{-1} \cap \vspaceof{ \gv + \gamma \hv, \frob{\gv}{1} ,
\frob{\hv}{1} } = \vspaceof{ \frob{\gv}{1} + \gamma^{q^{m-1}}
\frob{\hv}{1} }.
\end{equation}
Applying the inverse Frobenius to the last code, we get
$\vspaceof{ \gv + \gamma^{q^{m-2}}\hv}$. Since,
from~(\ref{eq:code_1}), we also know
$\vspaceof{ \gv + \gamma \hv }$, one can compute
\begin{equation}\label{eq:code_2}
\vspaceof{ \gv + \gamma \hv
} +
{\vspaceof{ \frob{\gv}{1} + \gamma^{q^{m-1}}
\frob{\hv}{1} }}^{[-1]}  = 
\vspaceof{ \gv + \gamma \hv, \gv + \gamma^{q^{m-2}} \hv}
 = \vspaceof{ \gv, \hv }.
\end{equation}

\noindent Next, 
for any $i \in \{0, \ldots, r\}$, one can compute
\[
  \Cpub^\perp \cap \vspaceof{ \gv, \hv }^{[i]} =
  \vspaceof{ \gv^{[i]}+\gamma \hv^{[i]} }.
\]
By applying the $i$--th inverse Frobenius to the previous result, we obtain the space $\vspaceof{ \gv + \gamma^{q^{-i}} \hv }$ for any
$i \in \{0, \ldots, r\}$.
In summary, we know the spaces
\[
  \vspaceof{ \gv + \gamma \hv },
  \vspaceof{ \gv + \gamma^{q^{-1}} \hv },
  \ldots,
  \vspaceof{ \gv + \gamma^{q^{-r}} \hv }.
\]

In addition, from Lemma~\ref{lem:supports}, the vector $\av$ and hence
the pair $(\gv, \hv)$ is determined up to some multiplicative
constant. Therefore, one can choose an arbitrary element of
$\vspaceof{ \gv + \gamma \hv }$ and suppose that this element is
$\gv + \gamma \hv$.

\subsection{Step 2: Finding $\gamma$}
In summary, the vector $\gv + \gamma \hv$ and the spaces
$\vspaceof{ \gv + \gamma^{q^{i}} \hv}$ for any
$i \in \{-1, \ldots, -r\}$ are known. To compute $\gamma$, we will use the
following lemma.

\begin{lemma}
  For $i, j \in \{1, \ldots, r\}$, $i \neq j$, there exists a unique
  pair
  $(\uv_{ij},\vv_{ij}) \in \vspaceof{ \gv + \gamma^{q^{-i}} \hv }
  \times \vspaceof{ \gv + \gamma^{q^{-j}} \hv }$
  such that $\uv_{ij} + \vv_{ij} = \gv + \gamma \hv$.
\end{lemma}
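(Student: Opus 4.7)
The plan is to parametrize both rank--one spaces, reduce the existence and uniqueness statement to solving a $2 \times 2$ linear system in $\Fqm$, and then use Assumption~(\ref{item:assump2}) to show that the determinant of this system is nonzero.

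First, I would write an arbitrary element of $\vspaceof{\gv + \gamma^{q^{-i}}\hv}$ as $\alpha(\gv + \gamma^{q^{-i}}\hv)$ and an arbitrary element of $\vspaceof{\gv + \gamma^{q^{-j}}\hv}$ as $\beta(\gv + \gamma^{q^{-j}}\hv)$ for scalars $\alpha,\beta\in\Fqm$. Imposing $\uv_{ij} + \vv_{ij} = \gv + \gamma\hv$ yields
\[
(\alpha+\beta)\gv + (\alpha\gamma^{q^{-i}}+\beta\gamma^{q^{-j}})\hv = \gv + \gamma\hv.
\]
By Assumption~(\ref{item:assump1}), $\Gab{r+3}{\gv}\cap\Gab{r+3}{\hv}=\{0\}$ with both $\gv,\hv\neq 0$, so $\gv$ and $\hv$ are $\Fqm$--linearly independent. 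Identifying coefficients therefore gives the linear system
\[
\begin{cases}
\alpha+\beta = 1,\\
\alpha\gamma^{q^{-i}}+\beta\gamma^{q^{-j}} = \gamma.
\end{cases}
\]

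The determinant of this system is $\gamma^{q^{-j}}-\gamma^{q^{-i}}$. Existence and uniqueness of $(\alpha,\beta)$ hence reduces to showing this difference is nonzero, i.e. $\gamma^{q^{-i}}\neq\gamma^{q^{-j}}$.

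The main (and only nontrivial) point is exactly this nonvanishing, and it is where Assumption~(\ref{item:assump2}) enters. After applying the Frobenius $q^{\max(i,j)}$ times, the condition $\gamma^{q^{-i}}=\gamma^{q^{-j}}$ becomes $\gamma^{q^{|i-j|}}=\gamma$, which is equivalent to $\gamma\in\F_{q^{\gcd(|i-j|,m)}}$. Since $i\neq j$ and $i,j\in\{1,\dots,r\}$ with $r=n-k-1<n\leq m$, we have $1\leq|i-j|<m$ and hence $\gcd(|i-j|,m)<m$, so $\F_{q^{\gcd(|i-j|,m)}}$ is a proper subfield of $\Fqm$. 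Assumption~(\ref{item:assump2}) forbids $\gamma$ from lying in any such subfield, contradicting $\gamma^{q^{-i}}=\gamma^{q^{-j}}$. Therefore the determinant is nonzero, the system has a unique solution $(\alpha,\beta)$, and this yields the desired unique pair $(\uv_{ij},\vv_{ij})$.
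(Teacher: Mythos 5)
Your proof is correct and follows essentially the same route as the paper's one-line argument, which simply observes that $\vspaceof{\gv,\hv}=\vspaceof{\gv+\gamma^{q^{-i}}\hv}\oplus\vspaceof{\gv+\gamma^{q^{-j}}\hv}$; your explicit $2\times 2$ system with determinant $\gamma^{q^{-j}}-\gamma^{q^{-i}}$ is just that direct-sum statement spelled out. The extra details you supply --- the $\Fqm$-linear independence of $\gv,\hv$ from Assumption~(1) and the distinctness of the conjugates $\gamma^{q^{-i}},\gamma^{q^{-j}}$ from Assumption~(2) --- are exactly the nondegeneracy facts the paper leaves implicit, so this is a fleshed-out version of the same proof rather than a different one.
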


\if\longversion1
\begin{proof}
  It suffices to observe that
  $\vspaceof{ \gv, \hv } = \vspaceof{ \gv + \gamma^{q^{-i}} \hv }
  \oplus \vspaceof{ \gv + \gamma^{q^{-j}} \hv }$. \qed
\end{proof}
\fi

The pairs of vectors $(\uv_{ij}, \vv_{ij})$ can be easily computed. Thus, from
now on, we suppose we know them. In addition,
despite $\gamma$, $\gv + \gamma^{q^{-i}}\hv$ and $\gv + \gamma^{q^{-j}}\hv$ 
being unknown, a calculation allows us to show that
$\uv_{ij}, \vv_{ij}$ have the following expressions.
\begin{equation}\label{eq:expr_uij}
  \uv_{ij} = \frac{\gamma^{q^{-j}}-\gamma}{\gamma^{q^{-j}}- \gamma^{q^{-i}}}
  \cdot(\gv + \gamma^{q^{-i}} \hv) \quad {\rm and} \quad
  \vv_{ij} = \frac{\gamma - \gamma^{q^{-i}}}{\gamma^{q^{-j}} - \gamma^{q^{-i}}}
  \cdot (\gv + \gamma^{q^{-j}}\hv).
\end{equation}

Consider the vectors $\uv_{12}$ and $\uv_{13}$. They are collinear
since, from~(\ref{eq:expr_uij}), they are both multiples of
$\gv + \gamma^{q^{-1}}\hv$. Therefore, one can compute the scalar $\alpha$
such that $\uv_{12} = \alpha \cdot \uv_{13}$. From~(\ref{eq:expr_uij}) we deduce
that $\gamma$ satisfies the following relation.
\begin{equation}\label{eq:relation_gamma}
  \frac{\gamma^{q^{-2}} - \gamma}{\gamma^{q^{-2}} - \gamma^{q^{-1}}} =
  \alpha \cdot
  \frac{\gamma^{q^{-3}} - \gamma}{\gamma^{q^{-3}} - \gamma^{q^{-1}}}\cdot
\end{equation}
Or equivalently, $\gamma$ is a root of the polynomial
\[
  Q_\gamma(X) \eqdef  (X^q - X^{q^3})(X - X^{q^2}) - \alpha^{q^3}
  (X - X^{q^3})(X^q - X^{q^2}).
\]
Moreover $(X^q - X)^{q+1}$ divides $Q_\gamma$. Indeed, $X^q - X$
divides $X^{q^2}-X$ and $X^{q^3}-X$ then $(X^q - X)^q$ divides
$X^{q^3}-X^q$ and $X^{q^2}-X^q$.

We now set
\[
  P_\gamma (X) \eqdef \frac{Q_\gamma}{(X^q - X)^{q+1}}\cdot
\]
Since the element $\gamma$ we look for is not in $\Fq$, it is not a
root of $X^q - X$ and hence is a root of $P_\gamma$. Next, the
forthcoming Proposition~\ref{prop:roots_of_P} provides the description
of the other roots. We first need a technical lemma.

\begin{lemma}\label{lem:modular}
  Let $a, b, c, d \in \Fq$, let $i, j$ be two non-negative integers and set
  $A(X) = X^{q^i} - X^{q^j} \in \Fq[X]$. Then,
  \[
    A \left( \frac{aX+b}{cX+d} \right) =
    \frac{ad-bc}{(cX+d)^{q^i+q^j}} \cdot A(X).
  \]
\end{lemma}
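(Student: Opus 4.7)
The statement is essentially a functional equation saying that the two-term linearized polynomial $A(X) = X^{q^i} - X^{q^j}$ transforms nicely under the Möbius substitution $X \mapsto (aX+b)/(cX+d)$ with coefficients in $\Fq$. My plan is to prove it by direct computation, leveraging the fact that the coefficients $a,b,c,d$ are Frobenius-fixed.

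The key observation is that, since $a,b,c,d \in \Fq$, we have $a^{q^i} = a$, $b^{q^i} = b$, etc., for every non-negative integer $i$. Consequently, raising the substituted expression to the $q^i$-th power yields
\[
  \left(\frac{aX+b}{cX+d}\right)^{q^i} = \frac{a X^{q^i} + b}{c X^{q^i} + d},
\]
and similarly for $q^j$. This reduces the computation to manipulating ordinary linear fractions.

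Next, I would write
\[
  A\!\left(\frac{aX+b}{cX+d}\right) = \frac{aX^{q^i}+b}{cX^{q^i}+d} - \frac{aX^{q^j}+b}{cX^{q^j}+d}
\]
and put the right-hand side over a common denominator. Expanding the two cross-products $(aX^{q^i}+b)(cX^{q^j}+d)$ and $(aX^{q^j}+b)(cX^{q^i}+d)$, the terms $acX^{q^i+q^j}$ and $bd$ cancel, leaving $(ad-bc)(X^{q^i} - X^{q^j}) = (ad-bc)\,A(X)$ in the numerator. For the denominator, Frobenius-invariance of $c$ and $d$ gives
\[
  (cX^{q^i}+d)(cX^{q^j}+d) = (cX+d)^{q^i}(cX+d)^{q^j} = (cX+d)^{q^i+q^j},
\]
which yields the claimed identity.

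I do not foresee any real obstacle here: the only subtlety is the systematic use of $\F_q$-linearity of Frobenius and the Frobenius-invariance of the constants, which is precisely what makes linearized polynomials well-behaved under $\PGL(2,\F_q)$-actions. The result will later be applied to factor or simplify $Q_\gamma$ and $P_\gamma$, so a clean multiplicative form is exactly what is needed.
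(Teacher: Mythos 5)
Your computation is correct: Frobenius-invariance of $a,b,c,d\in\Fq$ and additivity of the $q^i$-th power map give $\bigl(\frac{aX+b}{cX+d}\bigr)^{q^i}=\frac{aX^{q^i}+b}{cX^{q^i}+d}$, and the cross-multiplication indeed leaves $(ad-bc)A(X)$ over $(cX+d)^{q^i+q^j}$. The paper states this lemma without proof, treating it as a routine verification, and your direct calculation is exactly the argument it implicitly relies on, so there is nothing to add.
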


\begin{proposition}\label{prop:roots_of_P}
  The set of roots of $P_\gamma$ equals the orbit of $\gamma$ under the action
  of $\PGL{2}{\F_q}$. Equivalently, any root of $P_\gamma$ is of the form
    $\frac{a \gamma + b}{c \gamma + d}$ for $a, b, c, d \in \Fq$ such that
    $ad -bc \neq 0$.
\end{proposition}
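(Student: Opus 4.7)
The plan is to prove both inclusions between the orbit $\PGL{2}{\Fq} \cdot \gamma$ and the root set of $P_\gamma$ by first showing one inclusion via Lemma~\ref{lem:modular} and then matching cardinalities via a degree count.

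For the first inclusion, I will evaluate $Q_\gamma$ at $\delta = (a\gamma+b)/(c\gamma+d)$ with $ad-bc \neq 0$. Applying Lemma~\ref{lem:modular} to each of the four factors of the form $X^{q^i} - X^{q^j}$ appearing in $Q_\gamma$, each such factor picks up the scalar $(ad-bc)/(c\gamma+d)^{q^i+q^j}$ when passing from $\gamma$ to $\delta$. The crucial observation is that the two products $(X^q - X^{q^3})(X - X^{q^2})$ and $(X - X^{q^3})(X^q - X^{q^2})$ both carry the combined exponent multiset $\{1,q,q^2,q^3\}$, so the same global scalar $(ad-bc)^2/(c\gamma+d)^{1+q+q^2+q^3}$ factors out of $Q_\gamma(\delta)$, yielding a nonzero multiple of $Q_\gamma(\gamma) = 0$. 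Moreover $\delta \notin \Fq$: otherwise the equality $(a - \delta c)\gamma = \delta d - b$ would force $\gamma \in \Fq$, contradicting assumption~(\ref{item:assump2}). Hence $\delta$ is not a root of $(X^q - X)^{q+1}$ and must be a root of $P_\gamma$.

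For the cardinality match, I will argue that the stabilizer of $\gamma$ in $\PGL{2}{\Fq}$ is trivial: the fixed-point relation $(a\gamma+b)/(c\gamma+d) = \gamma$ yields the polynomial equation $c\gamma^2 + (d-a)\gamma - b = 0$, and assumption~(\ref{item:assump2}) forbids $\gamma$ from satisfying any nonzero $\Fq$-polynomial of degree at most $2$ (one checks this case by case according to the parity of $m$, using that $\F_{q^2} \cap \Fqm$ is either $\Fq$ or a proper subfield containing $\gamma$), forcing $c = 0$, $a = d$, and $b = 0$. Thus the orbit has exactly $|\PGL{2}{\Fq}| = q^3 - q$ elements. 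Expanding $Q_\gamma$ and collecting like terms shows $\deg Q_\gamma = q^2 + q^3$ with leading coefficient $1 - \alpha^{q^3}$; a direct computation, again invoking assumption~(\ref{item:assump2}) to guarantee that the iterates $\gamma, \gamma^q, \gamma^{q^2}, \gamma^{q^3}$ are distinct, shows this coefficient is nonzero. Dividing by $(X^q - X)^{q+1}$, which has degree $q^2 + q$, then yields $\deg P_\gamma = q^3 - q$. Matching $q^3 - q$ distinct orbit roots against a polynomial of the same degree forces them to constitute exactly the full root set of $P_\gamma$. The main obstacle I anticipate is the degree bookkeeping: one must carefully check that the leading terms of $Q_\gamma$ do not cancel and that the divisibility by $(X^q-X)^{q+1}$ is exact, which both rest on the non-degeneracy provided by assumption~(\ref{item:assump2}).
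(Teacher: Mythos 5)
Your proof is correct and follows essentially the same route as the paper: Lemma~\ref{lem:modular} shows that every element of the $\PGL{2}{\Fq}$--orbit of $\gamma$ is a root (you transport the substitution through $Q_\gamma$ and then exclude roots in $\Fq$, while the paper states the resulting functional equation directly for $P_\gamma$), and the triviality of the stabiliser together with $\deg P_\gamma = q^3-q$ yields the reverse inclusion. Your additional check that the leading coefficient $1-\alpha^{q^3}$ of $Q_\gamma$ does not vanish (so that $\deg Q_\gamma = q^3+q^2$) is a detail the paper leaves implicit, and it indeed reduces to $\gamma\notin\Fq$.
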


\if\longversion1
\begin{proof}
  First, notice that $\deg Q_\gamma = q^3 + q^2$
  and hence
  \[
    \deg P_{\gamma} = \deg Q_\gamma - q(q+1) = q^3 - q = |\PGL{2}{\Fq}|.
  \]
  Second, for any $
  \begin{pmatrix}
    a & b \\ c & d
  \end{pmatrix} \in \PGL{2}{\Fq}$, Lemma~\ref{lem:modular} entails
  \[
    P_\gamma\left( \frac{aX+b}{cX+d}\right) = \frac{1}{(cX+d)^{q^3-q}}
    \cdot P_\gamma(X).
  \]
  Since $\gamma \in \Fqm\setminus\Fq$, then $c\gamma + d \neq 0$ and hence,
  $P_\gamma \left(\frac{a\gamma+b}{c\gamma + d}\right) = 0$.

  We proved that any element in the orbit of $\gamma$ under
  $\PGL{2}{\Fq}$ is a root of $P_\gamma$.  To conclude, we need to
  prove that the orbit of $\gamma$ under $\PGL{2}{\F_q}$ has
  cardinality $\deg P_\gamma = q^3-q$ which means that the stabiliser of
  $\gamma$ with respect to this group action is trivial. Indeed,
  suppose that
  $$
  \gamma = \frac{a \gamma + b}{c\gamma +d},
  \quad \textrm{for\ some}\quad
    \begin{pmatrix}
    a & b \\ c & d
  \end{pmatrix}
\in \PGL{2}{\Fq} \setminus \left\{
  \begin{pmatrix}
    1 & 0 \\ 0 & 1
  \end{pmatrix}
\right\}.
$$
Then $\gamma$ is a root of the polynomial
\[
  X(cX + d) - (aX + b) = cX^2 + (d-a)X + b \in \Fq[X].
\]
This polynomial is nonzero. Indeed, if it was, we would have
$b=c=0$ and $a = d$ which means
\[
  \begin{pmatrix}
    a & b \\ c & d
  \end{pmatrix}
  \sim
  \begin{pmatrix}
    1 & 0 \\ 0 & 1
  \end{pmatrix}
  \quad {\rm in}\quad \PGL{2}{\Fq}.
\]
Next, this nonzero polynomial cancelling $\gamma$ has degree at most
$2$, while from Assumption~(\ref{item:assump2}) the minimal polynomial of
$\gamma$ over $\Fq$ has degree $m > 2$. \qed
\end{proof}
\fi

Thanks to Propositions~\ref{prop:triples} and~\ref{prop:roots_of_P},
we deduce that choosing an arbitrary root $\gamma'$ of $P_\gamma$
provides a candidate for $\gamma$ and there remains to compute
$\gv', \hv'$ providing our triple.
\if\longversion1
Since
$\gamma = \frac{a\gamma' + b}{c \gamma' + d}$ for some
$a, b, c, d \in \Fq$, then, 
\[
\gv + \gamma \hv = \gv' + \gamma' \hv'
\]
where
\[
  \gv' \eqdef \frac{1}{(c \gamma' + d)} \cdot (d\gv + b\hv) \qquad
  {\rm and} \qquad
  \hv' \eqdef \frac{1}{(c \gamma' + d)} \cdot (c\gv + a\hv)
\]
Considering~(\ref{eq:expr_uij}) and using Lemma~\ref{lem:modular}, we get
  \[
    \uv_{12}  =
     \frac{\gamma'^{q^{-2}} - \gamma'}
               {\gamma'^{q^{-2}} - \gamma'^{q^{-1}}}
               \cdot (\gv' + \gamma'^{q^{-1}} \hv').
  \]
Consequently, we know $\gamma'$ and the vectors
$\gv' + \gamma' \hv' = \gv + \gamma \hv$ and $\uv_{12}$.
Thus, we can also compute
\[
  \gv' + \gamma'^{q^{-1}} \hv' =
  \frac{\gamma'^{q^{-2}} - \gamma'^{q^{-1}}}{\gamma'^{-2} - \gamma'} \uv_{12}.
\]
Knowing $\gamma'$, $\gv' + \gamma' \hv'$ and
$\gv' + \gamma'^{q^{-1}} \hv'$ allows us to recover $(\gv', \hv')$.
\else The triple can be deduced from the knowledge of
$\gv + \gamma \hv = \gv' + \gamma' \hv'$ and the computation of
$\gv' + \gamma'^{q^{-1}}\hv'$ which can be proved to satisfy
\[
  \gv' + \gamma'^{q^{-1}} \hv' =
  \frac{\gamma'^{q^{-2}} - \gamma'^{q^{-1}}}{\gamma'^{-2} - \gamma'} \uv_{12}.
\]
\fi 
\subsection{End of the attack}\label{ss:end}
Choose an arbitrary support vector $\av' \in \Fqm^n$ of rank $n$.  Let
$\Qm_0$ (resp. $\Qm_1$) be the unique $n \times n$ matrix with entries in $\Fq$
such that $\av' \Qm_0 = \gv'$ (resp. $\av' \Qm_1 = \hv'$).
Then, set $\Qm \eqdef \Qm_0^T + \gamma' \Qm_1^T$, we have
$$
\Cpubd = \Gab{n-k}{\av'}\cdot \Qm^T
$$
and the matrix $\Qm$ is nonsingular. Indeed, it sends the full rank
vector $\av'$ onto $\gv' + \gamma' \hv' = \gv + \gamma \hv = \av \Pm$.
Therefore, using Lemma~\ref{lem:hpub_decomposition} we get another
representation of the public code as
\[
  \Cpub = \Gab{k}{\av''} \cdot \Qm^{-1},
\]
where $\av'' \in \Fqm^n$ is so that $\Gab{k}{\av''} = \Gab{n-k}{\av'}^\perp$.
Then, any cipher text is of the form $\cv + \ev$ with $\cv \in \Cpub$ and
$\ev \in \Fqm^n$ of rank less than or equal to $\frac{n-k}{4}$. Then
$\ev \Qm$ has rank weight at most $\frac{n-k}{2}$ and hence, the vector
$(\cv + \ev) \Qm$ can be decoded as a corrupted codeword of $\Gab{k}{\av''}$
in order to recover $\ev$ and deduce the plain text.

\subsection{Complexity of the attack}
Let us conclude with a short complexity analysis of the attack.  Let
$\omega$ be the exponent of the complexity of linear algebra
operations. Additions, multiplications will be considered as
elementary operations in $\Fqm$ that we will count.
The evaluation of the Frobenius map costs $O(\log q)$
operations.

\paragraph{\bf Step 1.}  The computation of the dual public code
  $\Cpub^\perp$ costs $O(n^\omega)$ operations.  Next, the computation
  of $\frob{\Cpubd}{1}$ costs $O(n^2 \log q)$ operations in $\Fqm$ and
  the iterative computation of $\frob{\Cpubd}{i}$ for
  $i = 1,\dots, n-k+1$ costs $O(n^3 \log q)$ operations in $\Fqm$.

  The
  computation of $\Cpubd + \frob{\Cpubd}{1}$ boils down to
  Gaussian elimination and hence costs $O(n^\omega)$ operations in
  $\Fqm$. Since we compute $O(n)$ intersections of spaces, the
  overall cost of the computation of
  \[(\Cpubd + \frob{\Cpubd}{1}) \cap (\frob{\Cpubd}{1} + \frob{\Cpubd}{2}) \cap \cdots
  \cap (\frob{\Cpubd}{r} + \frob{\Cpubd}{r+1}),
\]
is of $O(n^{\omega + 1})$ operations in $\Fqm$.  As a conclusion, the
overall cost of the first step is $O(n^3 \log q + n^{\omega +1})$.

\paragraph{\bf Step 2.} The computation of a pair $(\uv_{ij}, \vv_{ij})$
  represents the resolution of a linear system with $2$ unknowns and
  $n$ equations, which costs $O(n)$ operations. This computation
  should be performed $O(n)$ times, yielding an overall cost of
  $O(n^2)$ operations in $\Fqm$, which is negligible compared to the
  previous step.

  Next, the computation of a root of $P_\gamma$ can be computed using
  the Cantor--Zassenhaus algorithm whose complexity is in
  $\widetilde O((\deg P_\gamma)^2 m \log q)$ operations in $\Fqm$
  (see for instance \cite[Th\'eor\`eme 19.20]{BCGLLSS17}),
  where $\widetilde O(\cdot)$ means that the factors in
  $\log (\deg P_\gamma)$ are neglected. Furthermore, since
  $\deg P_\gamma = q^3 - q$ we get a complexity of
  $\widetilde O (m q^6)$ for the calculation of $\gamma'$.

  The remainder of the attack consists in a finite number of linear
  systems solving, i.e. a cost of $O(n^\omega)$, which is negligible
  compared to Step~1.

\paragraph{\bf Summary.}  
This yields an overall cost of
$ O(n^3 \log q + n^{\omega + 1}) + \widetilde O (m q^6)$. Classically,
one chooses $q$ small and $n$ close to $m$, for instance $q=2$ and
$m = O(n)$. In this situation, we get an overall cost of
$O(n^{\omega +1})$.

\subsection{Implementation}
The attack has been implemented using {\sc Magma} and permits to
recover a $4$--tuple $(\av, \gamma' \Qm_0, \Qm_1)$ such that
$\Cpub^\perp = \Gab{n-k}{\av'} \cdot (\Qm_0 + \gamma' \Qm_1)^T$.  The
attack ran on a personal machine\footnote{ Processor:
  Intel\textregistered{} Core\texttrademark{} i5-8250U CPU @ 1.60GHz.
} and succeeded in a few seconds for parameters of cryptographic size
as illustrated by Table~\ref{tab:timings}. Our implementation is only
a proof of concept and could be significantly optimised.
\begin{table}[!h]
  \centering
  \begin{tabular}{|c|c|c|c|c|}
    \hline
    $q$ & $m$ & $n$ & $k$ & {Average time}\\
    \hline
    2 & 50 & 50 & 32 & 0.6 s \\
    \hline
    2 & 80 & 70 & 41 & 1.2 s \\
    \hline
    2 & 120 & 110 & 65 & 9.5 s \\
    \hline
  \end{tabular}
  \medskip
  \caption{Timings for our implementation. For any parameter,
    the attack has been run 100 times and the last column gives the average
  timing.}
  \label{tab:timings}
\end{table}

 \section*{Conclusion}
We provided a distinguisher {\em \`a la Overbeck} for the public keys
of Loidreau's scheme when $\lambda = 2$ and the public code has rate
$R_{\rm pub} \geq \frac 1 2$.  From this distinguisher, we are able to
derive a polynomial time key recovery attack. For small values of $q$,
the attack runs in $O(n^{\omega +1})$ operations in $\Fqm$ and a {\sc
  Magma} implementation succeeds to recover the hidden structure of the
public key in a few seconds.

For larger values of $\lambda$, the distinguisher holds when the dimension of
the code satisfies
\[
n \left(1 - \frac{1}{\lambda}\right)  + 1  < k < n - \lambda.
\]
Here, an analogy with Hamming metric and the original McEliece
encryption scheme can be drawn. Indeed, similarly to Loidreau's public
keys, high rate alternant codes have been proved to be distinguishable
from random codes in polynomial time \cite{FGOPT10}.

\section*{Acknowledgements}
The authors express a deep gratitude to the anonymous referees for
their very relevant suggestions.  The second author was funded by
French ANR projects ANR-15-CE39-0013 {\em Manta} and ANR-17-CE39-0007
{\em CBCrypt}.

\bibliographystyle{splncs04}

\if\longversion1
\appendix
\section{Proof of Proposition~\ref{prop:random}}
\label{subsec:proof_sum_frob}

\subsection{Preliminaries on Gaussian binomial coefficients}

\begin{notation}
  In what follows, we denote by $\GB{a}{b}{q^m}$ the Gaussian binomial
  coefficient representing the number of subspaces of dimension $b$
  of a vector space of dimension $a$ over $\Fqm$.
\end{notation}

\begin{lemma}\label{lem:bounding_GB}
  There exists a positive constant $C$ such that for any
  pair of positive integers $n, k$ such that $n \geq k$,
  we have
    \[
      q^{k(n-k)} \leq \GB{n}{k}{q} \leq C \cdot q^{k(n-k)} .
    \]
\end{lemma}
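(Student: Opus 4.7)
The plan is to start from the standard product formula for the Gaussian binomial coefficient,
\[
\GB{n}{k}{q} \;=\; \prod_{i=0}^{k-1} \frac{q^{n-i}-1}{q^{k-i}-1},
\]
and extract the expected main factor $q^{k(n-k)}$ by writing each term as
\[
\frac{q^{n-i}-1}{q^{k-i}-1} \;=\; q^{(n-i)-(k-i)} \cdot \frac{1-q^{-(n-i)}}{1-q^{-(k-i)}} \;=\; q^{n-k}\cdot\frac{1-q^{-(n-i)}}{1-q^{-(k-i)}}.
\]
Multiplying over $i = 0, \ldots, k-1$ isolates $q^{k(n-k)}$ times a product of correction factors, and both bounds then reduce to controlling this residual product.

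For the lower bound, I would observe that since $n \geq k$, we have $q^{-(n-i)} \leq q^{-(k-i)}$, so each numerator satisfies $1 - q^{-(n-i)} \geq 1 - q^{-(k-i)} > 0$. Hence every correction factor is at least $1$, and the product is bounded below by $1$, giving $\GB{n}{k}{q} \geq q^{k(n-k)}$ immediately.

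For the upper bound, I would use $1 - q^{-(n-i)} \leq 1$, so
\[
\GB{n}{k}{q} \;\leq\; q^{k(n-k)} \prod_{i=0}^{k-1} \frac{1}{1 - q^{-(k-i)}} \;=\; q^{k(n-k)} \prod_{j=1}^{k} \frac{1}{1 - q^{-j}}.
\]
It then suffices to show that $\prod_{j=1}^{\infty} (1 - q^{-j})^{-1}$ converges to a finite constant, independent of $n$ and $k$; taking $C$ to be this constant finishes the proof.

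The only non-routine step is the convergence of this infinite product, but this follows from the classical criterion: taking logarithms, $-\sum_{j \geq 1} \log(1 - q^{-j})$ converges because $-\log(1 - q^{-j}) \sim q^{-j}$ and $\sum q^{-j}$ is a convergent geometric series (as $q \geq 2$). So there is no real obstacle; the argument is entirely a matter of bookkeeping with the product formula, and the only subtlety is to note that $C$ depends on $q$ (fixed throughout the paper) but not on $n$ or $k$.
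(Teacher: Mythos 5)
Your proof is correct and follows essentially the same route as the paper: both start from the product formula for $\GB{n}{k}{q}$, factor out $q^{k(n-k)}$, get the lower bound from the correction factors being at least $1$ when $n \geq k$, and obtain $C$ as the convergent infinite product $\prod_{j\geq 1}(1-q^{-j})^{-1}$ via the logarithm/geometric-series comparison. Your closing remark that $C$ depends on $q$ but not on $n,k$ matches the paper's setting (and the paper adds in a remark that one can take $C \leq 4$ uniformly for $q \geq 2$).
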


\begin{proof}
  By definition of Gaussian binomials, we have
  \[
    \GB{n}{k}{q} = \prod_{t = 0}^{k-1}\frac{q^n-q^t}{q^k-q^t} =
    q^{k(n-k)} \prod_{t = 0}^{k-1} \frac{1 - q^{t-n}}{1-q^{t-k}}\cdot
  \]
  Since $n \geq k $, we get
  \[
    \prod_{t=0}^{k-1} \frac{1-q^{t-n}}{1-q^{t-k}} \geq 1,
  \]
  which yields the left hand inequality.  To get the other equality,
  we need to bound from above the product:
  \[
    \prod_{t=0}^{k-1} \frac{1-q^{t-n}}{1-q^{t-k}} \leq \prod_{t=0}^{k-1}
    \frac{1}{1-q^{t-k}} = \prod_{j=0}^{k-1} \frac{1}{1 - \frac{1}{q^{j+1}}},
  \]
  where the last equality is obtained by applying the change of variables
  $j = k-1-t$.
  Set
  \[
    a_k \eqdef \prod_{j=0}^{k-1} \frac{1}{1 - \frac{1}{q^{j+1}}}\cdot
  \]
  The sequence $a_k$ is increasing and converges.
  Indeed,
  \[
    \log (a_k) = \sum_{j=0}^{k-1} - \log \left(1 - \frac{1}{q^{j+1}}\right)
  \]
  and the series with general term $-\log(1 - 1/q^{j+1})$ converges.
  As a conclusion, the right-hand inequality is obtained by taking
  \[
    C \eqdef \prod_{j=0}^{\infty} \frac{1}{1-\frac{1}{q^{j+1}}}\cdot
  \]
  \qed
\end{proof}

\begin{remark}
  A finer analysis would permit to prove that
  $C \leq {\left(\frac{q}{q-1} \right)}^{\frac{q}{q-1}}$. In particular, since
  $q \geq 2$, we have that $C \leq 4$.
\end{remark}

\subsection{The proof}

Let $\Crand$ be a subspace of $\Fqm^n$ chosen uniformly at random among its subspaces of dimension $k$.
From $\Crand$ we build the map
\[
\Psi :\map{\overbrace{\Crand \times \cdots \times \Crand}^{(s+1)\
    \textrm{times}}}{ \F_{q^m}^n}{(\cv_0, \ldots, \cv_s)}{\cv_0 +
  \frob{\cv_1}{1} + \cdots + \frob{\cv_s}{s}}.
\]
The image of this map is
$\Crand + \frob{\Crand}{1} + \cdots + \frob{\Crand}{s}$ and hence the
dimension of $\Crand + \frob{\Crand}{1} + \cdots + \frob{\Crand}{s}$
is related to the dimension of the kernel of $\Psi$. Therefore, our
approach will consist in estimating $\esp{|\ker \Psi|}$.

We have
\begin{equation}\label{eq:esp_Ker_Psi}
  \esp{|\ker \Psi|} = \sum_{\stackrel{(\xv_0, \ldots, \xv_{s}) \in
      {(\F_{q^m}^n)}^s}{\xv_0 + \frob{\xv_1}{1} + \cdots +
      \frob{\xv_s}{s} =0}} \prob\left(\xv_0, \ldots, \xv_{s} \in
    \Crand \right).
\end{equation}

\begin{lemma}
  \label{lem:prob_subset_Crand}
  Let $\AC$ be a subspace of $\Fqm^n$ of dimension $t\leq k$. Then
  \[
  \prob\left(\AC\subseteq\Crand\right) \leq C \cdot q^{-mt(n-k)},
\]
where $C$ is the constant of Lemma~\ref{lem:bounding_GB}.
\end{lemma}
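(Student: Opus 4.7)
The plan is to reduce the probability to a ratio of Gaussian binomial coefficients and then apply Lemma~\ref{lem:bounding_GB}.

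First I would count explicitly. Since $\Crand$ is drawn uniformly at random from the subspaces of $\Fqm^n$ of dimension $k$, the probability that $\AC \subseteq \Crand$ is simply the proportion of such subspaces containing $\AC$. The total number of $k$--dimensional subspaces is $\GB{n}{k}{q^m}$, while the $k$--dimensional subspaces of $\Fqm^n$ containing the fixed $t$--dimensional space $\AC$ are in bijection with the $(k-t)$--dimensional subspaces of the quotient $\Fqm^n / \AC$, whose number is $\GB{n-t}{k-t}{q^m}$. Hence
\[
\prob\left(\AC \subseteq \Crand\right) = \frac{\GB{n-t}{k-t}{q^m}}{\GB{n}{k}{q^m}}.
\]

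Next I would apply Lemma~\ref{lem:bounding_GB} with $q$ replaced by $q^m$ (the constant $C$ obtained from the convergent product is uniform in the base since $q^m \geq 2$, and in fact improves as the base grows). This yields the upper bound $\GB{n-t}{k-t}{q^m} \leq C \cdot q^{m(k-t)(n-k)}$ on the numerator, using that $(n-t)-(k-t) = n-k$, and the lower bound $\GB{n}{k}{q^m} \geq q^{mk(n-k)}$ on the denominator.

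Combining the two bounds gives
\[
\prob\left(\AC \subseteq \Crand\right) \leq C \cdot q^{m(k-t)(n-k) - mk(n-k)} = C \cdot q^{-mt(n-k)},
\]
which is the claimed estimate. There is essentially no obstacle in this argument; the only point requiring a moment of care is verifying that Lemma~\ref{lem:bounding_GB} indeed applies with base $q^m$ using the same constant $C$, which is immediate from the explicit expression for $C$ as a convergent product.
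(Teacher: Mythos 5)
Your proof is correct and follows essentially the same route as the paper: express the probability as the ratio $\GB{n-t}{k-t}{q^m}/\GB{n}{k}{q^m}$ and bound numerator and denominator via Lemma~\ref{lem:bounding_GB}. Your extra remark that the constant of Lemma~\ref{lem:bounding_GB} remains valid with base $q^m$ (since the defining product only shrinks as the base grows) is a nice point of care that the paper passes over silently.
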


\begin{proof}
  We have
  \[
  \prob (\code A \subseteq \Crand) = \frac{\GB{n- t}{k -
      t}{q^m}}{\GB{n}{k}{q^m}}\cdot
\]
  Using Lemma~\ref{lem:bounding_GB} we get the upper bound,
\[
  \prob (\code A \subseteq \Crand) \leq C\cdot q^{m(k-t)(n-k)} \cdot q^{-mk(n-k)}
  = C \cdot q^{-mt(n-k)}.
\]
  \qed
\end{proof}

For any $0 \leq t \leq s+1$, we introduce the set
\renewcommand\arraystretch{1.3}
\[
  E_t \eqdef \left\{
    (\xv_0, \dots, \xv_s) \in (\Fqm^n)^s ~\Bigg|~
  \begin{array}{rcl}
    \xv_0 + \frob{\xv_1}{1} +\cdots + \frob{\xv_s}{s} &=& 0\\
    \dimfqm \vspaceof{\xv_0, \ldots, \xv_{s}} &=& t    
  \end{array}    
    \right\}\cdot
  \]
  \renewcommand\arraystretch{1}
Thanks to (\ref{eq:esp_Ker_Psi}) and
Lemma~\ref{lem:prob_subset_Crand}, we can write that
\begin{equation}\label{eq:major_esp}
  \esp{|\ker\Psi|}
    \leq
    C \cdot \sum_{t = 0}^{s + 1}
    q^{-mt(n-k)}
     \card{E_t}.
\end{equation}

   \begin{lemma}\label{lem:maj_card_Et}
     Let $1 \leq t \leq k-1$. Then,
     \[
       \card{E_t}\leq C \cdot q^{(mt+n)(s+1-t)+mn(t-1)}.
     \]
   \end{lemma}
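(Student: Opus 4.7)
To bound $|E_t|$, I would decompose the tuples $(\xv_0,\dots,\xv_s)\in E_t$ according to the $\Fqm$-dimension of $W:=\vspaceof{\xv_1,\dots,\xv_s}$. Since the linear relation forces $\xv_0=-\sum_{i=1}^s\frob{\xv_i}{i}$ to be determined by the other components, adjoining $\xv_0$ to $W$ can increase the span by at most one dimension, so $\dim\vspaceof{\xv_0,\dots,\xv_s}=t$ forces $\dim W\in\{t-1,t\}$. In the case $\dim W=t-1$ I would enumerate $W$ using Lemma~\ref{lem:bounding_GB} (at most $Cq^{m(t-1)(n-t+1)}$ choices) and then the tuple in $W^s$ (at most $q^{m(t-1)s}$), discarding the extra condition $\xv_0\notin W$ as a harmless reduction. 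The resulting bound $Cq^{m(t-1)(n+s-t+1)}$ is at most the target, since a short calculation shows the exponent gap is
\[
(mt+n)(s+1-t)+mn(t-1)-m(t-1)(n+s-t+1)=(m+n)(s+1-t)\geq 0.
\]

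The harder case is $\dim W=t$, where the additional requirement $\xv_0\in W$ must be enforced. I would encode this by the $\Fq$-linear map
\[
\sigma_W:W^s\longrightarrow\Fqm^n/W,\qquad (\xv_1,\dots,\xv_s)\longmapsto\sum_{i=1}^s\frob{\xv_i}{i}\bmod W,
\]
so that the valid tuples form $\ker\sigma_W$. Checking that $\sigma_W$ surjects onto $(W+\sum_{i=1}^s\frob{W}{i})/W$ (invert Frobenius termwise on a decomposition), an $\Fq$-rank-nullity computation yields $|\ker\sigma_W|=q^{m(t(s+1)-d_W)}$ where $d_W:=\dim_\Fqm(W+\sum_{i=1}^s\frob{W}{i})$. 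The remaining task is to bound $\sum_{\dim W=t}|\ker\sigma_W|$ by $Cq^{(m+n)(s+1-t)}$, precisely the slack the first case leaves in the target exponent.

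The main obstacle will be this last summation. For $t=1$ it reduces to a clean computation: one writes $W=\vspaceof{\vv}$, observes $d_W=\min(s+1,\rkwt\vv)$, and uses the classical estimate $\#\{\vv\in\Fqm^n:\rkwt\vv=r\}\leq Cq^{r(m+n-r)}$ to make the series close up to $Cq^{(m+n)s}$ (polynomial factors in $s$ are absorbed). For general $t$ I would stratify $t$-dimensional subspaces by $d_W$, bound $\#\{W:d_W=d\}$ by first choosing an ambient $d$-dimensional ``Frobenius closure'' subspace via Lemma~\ref{lem:bounding_GB} and then choosing $W$ inside it, again via Lemma~\ref{lem:bounding_GB}, and finally sum the resulting nested-Gaussian-binomial bounds, which after telescoping deliver the required estimate. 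The care needed in this stratification, and in verifying that each $W$ has a well-defined ambient closure of dimension $d_W$, is the technical heart of the proof.
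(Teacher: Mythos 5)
You have a genuine gap, and it sits exactly where you located the ``technical heart''. Your reduction of the case $\dim W=t$ to bounding $\sum_{\dim W=t}\card{\ker\sigma_W}$ cannot work: only the tuples of $\ker\sigma_W$ whose span is \emph{exactly} $W$ belong to $E_t$, and once you drop that condition you charge, to every $t$-dimensional subspace $W\subseteq\Fqm^n$, at least the zero tuple (and many more low-span tuples when $t(s+1)>n$). Hence $\sum_{\dim W=t}\card{\ker\sigma_W}\geq \GB{n}{t}{q^m}\geq q^{mt(n-t)}$, which already exceeds the target exponent $(mt+n)(s+1-t)+mn(t-1)$ --- let alone your claimed slack $(m+n)(s+1-t)$ --- in the relevant regime: for $t=1$, $n\leq m$ and $n>2s+1$ one has $m(n-1)>(m+n)s$. (Minor additional slip: each case must be bounded by the full target, not by the leftover slack, but that is cosmetic by comparison.) The same flaw infects your ``clean'' $t=1$ computation: lines spanned by vectors of rank at least $s+1$ have $d_W=s+1$ and $\ker\sigma_W=\{0\}$, so they contribute nothing to $E_1$, yet your sum charges $1$ to each of the roughly $q^{m(n-1)}$ lines; the series only closes after reinstating the exact-span condition you discarded. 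Nor is the repair routine: even after restricting to $W$ with $d_W<t(s+1)$ (which a spanning kernel element does force), your proposed count of such $W$ --- pick an ambient ``closure'' $U$ of dimension $d_W$, then $W$ inside it --- ignores the constraint that $U$ must contain all the twists $\frob{W}{i}$ and overcounts so badly that the bound fails again (for $t=1$, $d=s$ it yields exponent $ms(n-s+1)$ against a target of $(m+n)s$). Counting subspaces with degenerate Frobenius sums accurately is essentially a statement of the same nature as Proposition~\ref{prop:random} itself, so this route is close to circular.

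The paper's proof avoids the problem by parametrizing not by the span $W\subseteq\Fqm^n$ (a family of size about $q^{mt(n-t)}$) but by the space of $\Fqm$-linear relations among $\xv_0,\dots,\xv_s$, encoded as a full-rank $(s+1-t)\times(s+1)$ reduced echelon matrix $\mat{M}$, of which there are only $\GB{s+1}{t}{q^m}\leq C\,q^{mt(s+1-t)}$. For fixed $\mat{M}$ it counts coordinatewise: eliminating the pivot entries turns $x_{0,i}+x_{1,i}^q+\cdots+x_{s,i}^{q^s}=0$ into a nonzero $q$-polynomial equation of $q$-degree at most $s+1-t$ in a single free entry, so each coordinate admits at most $q^{m(t-1)+s+1-t}$ solutions, giving $q^{mn(t-1)+n(s+1-t)}$ tuples per $\mat{M}$. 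That parametrization also ignores exactness of the rank, but its parameter set is small enough that the overcount is harmless --- precisely what your choice of parameter loses. To fix your argument you would either have to switch to such a relation-based parametrization, or prove a genuinely sharp bound on the number of $t$-dimensional $W$ with $\dimfqm\bigl(W+\frob{W}{1}+\cdots+\frob{W}{s}\bigr)=d$ for each $d<t(s+1)$, which is a substantially harder task than the lemma as proved in the paper.
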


   \begin{proof}
     Let $(\xv_0, \ldots, \xv_s) \in E_t$.
     Since the $\xv_i$'s span a
     space of dimension $t$, there exists a unique
     $(s+1-t)\times (s+1)$ full rank matrix $\mat M$ in reduced echelon
     form with entries in $\Fqm$ such that
     \begin{equation}
       \label{eq:MX}
       \mat M \cdot \left(
         \begin{array}{ccc}
          \null \hspace{.3cm} & \xv_0 & \hspace{.3cm} \null\\
           \hline
           & \vdots & \\
           \hline
           & \xv_s &
         \end{array}
       \right)= 0.
     \end{equation}
     Let us count the number of possible $(s+1)$--tuples
     $(\xv_0, \dots, \xv_s)$ satisfying \eqref{eq:MX} and such that
     $\xv_0 + \frob{\xv_1}{1} +\cdots + \frob{\xv_s}{s} = 0$.  For any
     $1 \leq i \leq n$, we have
\begin{eqnarray}
\label{eq:frob}
  x_{0,i} + x_{1,i}^q + \cdots + x_{s,i}^{q^s} &=& 0\\
\label{eq:linear_M}
{\rm and} \qquad  \mat M\cdot
  \left(
\begin{array}{c}
  x_{0,i}\\
  \vdots\\
  x_{s,i}
\end{array}
\right)&=&0
\end{eqnarray}
Let us label the columns of $\mat M$ from $0$ to $s$.  Let
$\mathcal P \subseteq \{0, \ldots, s\}$ be the set of indices of
columns which are pivots for $\mat M$ and $\mathcal{P}^c$ its
complementary. We denote by $a$ the smallest element of
$\mathcal{P}^c$. Notice that
\begin{equation}\label{eq:a0}
  |\mathcal{P}^c|  =t \qquad {\rm and} \qquad a \leq s+1-t.
\end{equation}

In (\ref{eq:frob}), we can eliminate any
$x_{j,i}$ where $j\in \mathcal{P}$ using (\ref{eq:linear_M}).
By this manner we get an expression depending only on the $x_{j,i}$'s
for $j \in \mathcal{P}^c$. If we fix the value of the $x_{r, i}$'s
for any $r \in \mathcal{P}^c \setminus \{a\}$, we obtain an equation of
the form
\[
  Q(x_{a, i}) = 0,
\]
where $Q$ is a $q$--polynomial of $q$--degree at most $a$. There are
then at most $q^{a}$ possible values for $x_{a, i}$ for any choice of
the elements $x_{r,i}$ with $r \in \mathcal{P}^c \setminus \{i\}$.
Using~(\ref{eq:a0}) we deduce that there are at most
\[
  q^{m(t-1) + a} \leq q^{m(t-1) + s+1-t}
\]
possible choices for the tuple $(x_{0, i}, \dots, x_{s,i})$.
Consequently, Equation (\ref{eq:MX}) has at most
$q^{mn(t-1) + n(s+1-t)}$ solutions.

Finally, since the full rank $(s+1-t)\times (s+1)$ matrices in row
echelon form are in one--to--one correspondence with $t$--dimensional
subspaces of $\Fqm^{s+1}$ there are $\GB{s+1}{t}{q^m}$ possible
choices for $\mat M$. Using Lemma~\ref{lem:bounding_GB}, we deduce the
result.  \qed
   \end{proof}

   Combining (\ref{eq:major_esp}) and Lemma~\ref{lem:maj_card_Et}, we get
   \[
     \begin{aligned}
       \esp{|\ker \Psi|} &
       \leq C^2 \cdot \sum_{t=0}^{s+1} q^{-mt(n-k) + (mt+n)(s+1-t)+mn(t-1)}\\
       & \leq C^2 \cdot q^{m(k(s+1)-n)} \cdot \sum_{t=0}^{s+1}
       q^{(s+1-t)(mt+n-mk)}.
       \end{aligned}
     \]
     By assumption (in the statement of Proposition~\ref{prop:random}),
     we have $s < k$. Next, since $n \leq m$, we see that the exponents
     in the above sum are all less than or equal $0$.
     More precisely,
     \[
       \begin{aligned}
         \sum_{t=0}^{s+1} q^{(s+1-t)(mt+n-mk)} & \leq
         2 + \sum_{t=0}^{s-1} q^{(s+1-t)(mt+n-mk)}\\
         & \leq 2 + \sum_{t=0}^{s-1}q^{-m(s+1-t)}\\
         & \leq 2 + \sum_{i=0}^{s-1} q^{-m(i+2)} \leq 2 + \frac{q^{-2m}}{1 - q^{-m}}\cdot
       \end{aligned}
     \]
     Consequently, we have the following result.
     \begin{proposition}\label{prop:esp_final}
     There is a positive constant $C'$ such that
     \[
       \esp{|\ker \Psi|} \leq C' \cdot q^{m(k(s+1)-n)}.
     \]
     \end{proposition}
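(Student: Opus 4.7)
The plan is to use the counting identity
\[
\esp{|\ker \Psi|} = \sum_{\substack{(\xv_0, \ldots, \xv_s) \in (\Fqm^n)^{s+1}\\ \xv_0 + \frob{\xv_1}{1} + \cdots + \frob{\xv_s}{s} = 0}} \prob\bigl(\xv_0, \dots, \xv_s \in \Crand\bigr)
\]
and then stratify the tuples on the right-hand side by the dimension $t$ of the $\Fqm$-span $\vspaceof{\xv_0, \dots, \xv_s}$. This stratification is natural: the probability that a tuple lies in $\Crand$ depends only on its span, and Lemma~\ref{lem:prob_subset_Crand} bounds it by $C \cdot q^{-mt(n-k)}$, which is constant on each stratum.

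Having fixed this partition, I would invoke Lemma~\ref{lem:maj_card_Et} for the cardinality of each stratum $E_t$. Combining the two ingredients gives
\[
\esp{|\ker \Psi|} \leq C^2 \sum_{t=0}^{s+1} q^{-mt(n-k) + (mt+n)(s+1-t) + mn(t-1)}.
\]
Factoring out $q^{m(k(s+1) - n)}$ from each exponent reduces the task to bounding
\[
S \eqdef \sum_{t=0}^{s+1} q^{(s+1-t)(mt + n - mk)}
\]
by a constant independent of $n, k, s$.

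The main obstacle is precisely this last step, since a priori the summand depends on parameters that are free to vary. I would peel off the two boundary terms $t = s$ and $t = s+1$, which each contribute $O(1)$, and then exploit the hypothesis $s < k$ together with $n \leq m$ to show that for $t \leq s-1$ the exponent $(s+1-t)(mt + n - mk)$ is at most $-m(s+1-t)$. This forces the tail of $S$ to be dominated by a geometric series in $q^{-m}$ whose total is at most $q^{-2m}/(1 - q^{-m})$, so $S \leq 2 + q^{-2m}/(1-q^{-m})$, a bound uniform in the parameters. Absorbing this universal constant together with $C^2$ into a new constant $C'$ yields the inequality $\esp{|\ker \Psi|} \leq C' \cdot q^{m(k(s+1) - n)}$ as required.
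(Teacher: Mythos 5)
Your proposal is correct and follows essentially the same route as the paper: the stratification of the kernel-counting identity by the dimension $t$ of $\vspaceof{\xv_0,\dots,\xv_s}$, the bounds from Lemma~\ref{lem:prob_subset_Crand} and Lemma~\ref{lem:maj_card_Et}, factoring out $q^{m(k(s+1)-n)}$, and controlling the residual sum by peeling off the $t=s, s+1$ terms and bounding the rest by a geometric series in $q^{-m}$ using $s<k$ and $n\leq m$. No meaningful differences from the paper's argument.
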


     To conclude the proof of Proposition~\ref{prop:random}, suppose that
     \[\dim_{\Fqm} \Crand + \frob{\Crand}{1}+\cdots
       +\frob{\Crand}{s} \leq \min \{k(s+1), n\} - a.
     \]
     This means that
     \[
       \dim_{\Fqm} \ker \Psi \geq \max \{0, k(s+1) - n\} + a.
     \]
     Using Markov inequality together with
     Proposition~\ref{prop:esp_final}, we get
     \[
       \begin{aligned}
         \prob \Big( |\ker \Psi| \geq & \ q^{m(\max \{0, k(s+1) - n\} + a)}
         \Big)\\
         &\qquad\leq C' \cdot \frac{q^{m(k(s+1)-n)}}
         { q^{m(\max \{0, k(s+1) - n\} + a)}}\\
         &\qquad\leq C' \cdot q^{-ma}.
       \end{aligned}
     \]
     This concludes the proof.

 \fi
\end{document}